\documentclass[aip,jcp,amsmath,amssymb,reprint]{revtex4-1}

\usepackage{graphicx}
\usepackage{dcolumn}
\usepackage{bm}
\usepackage{color}

\usepackage[utf8]{inputenc}
\usepackage[T1]{fontenc}
\usepackage{mathptmx}
\usepackage{amsthm}
\newtheorem{theorem}{Theorem}

\begin{document}

\preprint{AIP/123-QED}

\title[]{On the effective rank of canonical polyadic decomposition of electron repulsion integrals}

\author{Aleksandra Oszmian}
\author{Micha\l\ Lesiuk}
\email{m.lesiuk@uw.edu.pl}
\affiliation{\sl University of Warsaw, Faculty of Chemistry, Pasteura 1, 02-093 Warsaw, Poland}

\date{\today}

\begin{abstract}
In this paper, we study the effective rank of the canonical polyadic decomposition applied to the electron repulsion integrals, ubiquitous in quantum chemistry. We demonstrate, both mathematically and numerically, that in general the effective rank of this decomposition cannot grow linearly as a function of the system size. Moreover, we derive a lower bound for the effective rank in the form $\propto N_{\mathrm{AO}}^2/\log_2^7 N_{\mathrm{AO}}$, where $N_{\mathrm{AO}}$ is the number of atomic orbitals in the molecule, under mild conditions imposed on the decomposition threshold $\epsilon$. As a result, while a subquadratic growth of the CPD rank is not excluded, a linear relationship between the rank and $N_{\mathrm{AO}}$ cannot hold universally. The implications of these findings for the use of the canonical polyadic format to represent electron repulsion integrals in quantum chemistry are analyzed.
\end{abstract}

\maketitle

\section{\label{sec:intro} Introduction}

Electron repulsion integrals (ERI) are at the heart of quantum chemistry as necessary components to describe the Coulomb interaction between electrons when the many-body wavefunction is expanded in terms of orbital products. When the number of atomic orbital basis functions (denoted by Greek letters) is equal to $N$, the number of ERI $(\mu\nu|\sigma\lambda)$ formally increases as $\mathcal{O}(N^4)$. It is known that this unfavorable scaling can be reduced, e.g. by screening negligible contributions on-the-fly, but in practice it is often more convenient to exploit the low-rank nature of the ERI tensor from the start by expressing it as a combination of simpler quantities. For example, density fitting (DF)~\cite{whitten73,baerends73,dunlap79,alsenoy88,vahtras93} and Cholesky decomposition (CD)~\cite{beebe77,koch03,pedersen04,folkestad19} techniques effectively bring ERI into the form $(\mu\nu|\sigma\lambda) = \sum_Q B_{\mu\nu}^Q\,B_{\sigma\lambda}^Q$, where only three-index quantities appear instead of the original four-index tensor. The use of DF/CD format reduces memory footprint and increases efficiency of many quantum-chemical methods, see for example Refs.~\onlinecite{bozkaya16,deprince13,lesiuk22,peng19,deprince14,kats07,lesiuk20,epifanovsky13,pedersen04,nottoli23}, but has the disadvantage that the pairs of indices $\mu\nu$ and $\sigma\lambda$ remain ``pinned'' together in a single tensor. It is well-known that this property prevents reduction of scaling of some operations such as construction of the exchange part of the Fock matrix. Another ERI decomposition format that is free from this problem is tensor hypercontraction~\cite{hohenstein12,parrish12} (THC) which reads $(\mu\nu|\sigma\lambda) = \sum_{RS} X_{\mu R}\,X_{\nu R}\,Z_{RS}\,X_{\sigma S}\,X_{\lambda S}$. It requires only $\mathcal{O}(N^2)$ memory to store and has been used to reduce the scaling of numerous quantum chemistry methods~\cite{hohenstein13a,hohenstein13b,pokhilko25,song16,pokhilko24,hillers25a,kokkila15,song18,jiang23,song20,hummel17,hohenstein12b,hillers25b,lee19,sacchetta25} due to the complete separation of four orbital indices, representing a huge step forward over the DF/CD. Therefore, it becomes natural to look for other types of ERI decompositions that would offer additional benefits over THC.

The format that is the main focus of this work is the canonical polyadic decomposition (CPD)~\cite{hitchcock27,kolda09}. Applied to ERI, it has the following general form:
\begin{align}
\label{eq:cpd-brutal}
    (\mu\nu|\sigma\lambda) = \sum_r^R A_{\mu r}\,B_{\nu r}\,C_{\sigma r}\,D_{\lambda r}.
\end{align}
In other fields, this format is known under alternative abbreviations such as CP~\cite{kiers00} (canonical product), PARAFAC~\cite{harshman70} or CANDECOMP~\cite{carroll70}. To the best of our knowledge, this decomposition format has been first applied to ERI by Benedikt \emph{et al.}~\cite{benedikt11}. They have shown that the factors $A_{\mu r}$, $B_{\nu r}$, $\ldots$, can be computed by alternating least squares~\cite{carroll70,harshman70,faber03,tomasi06} (ALS) or accelerated gradient procedures, and have found numerically that the effective rank of this decomposition (the parameter $R$ in the above formula) grows as $N^{1.7}-N^{2.6}$ with the size of the system $N$, depending on the basis sets composition and requested accuracy. The usefulness of the CPD approximation obtained in this way has been demonstrated in various quantum-chemical contexts, for example, in MP2 calculations~\cite{schmitz17,schmitz18} or various levels of coupled-cluster theory applied to vibrational~\cite{godtliebsen13,godtliebsen15,madsen18} and electronic structure~\cite{benedikt13,schutski17,pierce21,pierce25}, but it is impossible to list all relevant applications here. Various advances in efficient determination of the CPD of ERI and other quantities have also been reported~\cite{pierce22,pierce25b}.

Note that the format given in Eq.~(\ref{eq:cpd-brutal}) violates the symmetries of the original tensor. However, one can write an analogous CPD formula that obeys all physical symmetries of the two-electron integrals and introduces no extra symmetries. It reads:
\begin{align}
\label{eq:cpd-symm}
    (\mu\nu|\sigma\lambda) = \sum_r^R \lambda_r \left( X_{\mu r}\,X_{\nu r}\,Y_{\sigma r}\,Y_{\lambda r} + Y_{\mu r}\,Y_{\nu r}\,X_{\sigma r}\,X_{\lambda r} \right),
\end{align}
where $\lambda_r$ are real scalars and the factors $X_{\mu r}$ and $Y_{\sigma r}$ can be constrained to be column-normalized, e.g. $\sum_\mu X_{\mu r}\,X_{\mu r} = 1$, without loss of generality. However, in the literature, we have not seen this format applied to the ERI tensor yet and to stay consistent with the current applications of the CPD format, we focus solely on the general form given in Eq.~(\ref{eq:cpd-brutal}). The reason why formula~(\ref{eq:cpd-brutal}) has been dominant is likely the simplicity of the ALS optimization scheme compared to Eq.~(\ref{eq:cpd-symm}), while the lack of strict symmetry has not been a problem in practice for a sufficiently low decomposition threshold. After all, the CPD format~(\ref{eq:cpd-brutal}) can always be \emph{ad hoc} symmetrized once the optimization is finished to restore the correct symmetries without meaningfully increasing the error.

It is clear that the usefulness of the CPD decomposition~(\ref{eq:cpd-brutal}) hinges to a large extent on the rank $R$ required in practice. Ideally, if the rank increases only linearly with the system size, such CPD approximation would be massively useful in quantum chemistry and would lead to immediate scaling reduction in many basic procedures, such as the Fock matrix construction. In this work, we show, both mathematically and numerically, that the rank of this decomposition cannot grow linearly for sufficiently large molecules. We establish a lower bound for the CPD rank as a function of the system size, proving that a subquadratic rank is possible. The implications of these findings for the application of CPD in the context of ERI are also discussed. Throughout the paper, we focus on Gaussian-type orbitals (GTO) as the most common choice of the atomic orbital (AO) basis in quantum chemistry (at least for systems without translational symmetry).

\section{\label{sec:lower} Lower bound for the rank of the CPD expansion}

\subsection{\label{sec:system} Model system and its basic properties}

\begin{figure}
    \centering
    \includegraphics[width=0.8\linewidth]{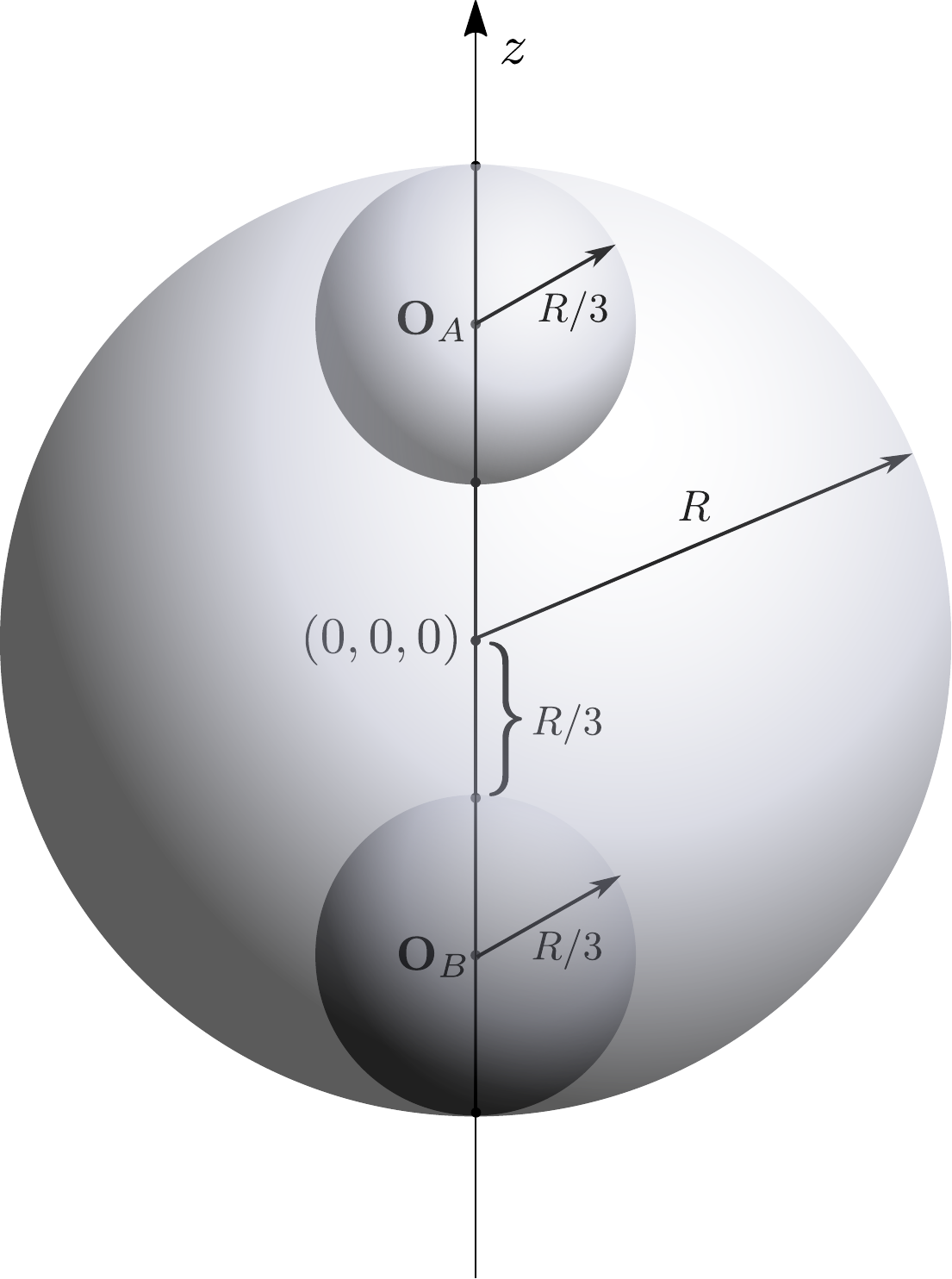}
    \caption{Graphical illustration of the construction of the model system.}
    \label{fig:system}
\end{figure}

In order to prove the main statement of the paper that will be specified shortly, we first introduce a model molecular system that will serve as the basis of our study. Consider a three-dimensional molecule/cluster that is approximately spherical in shape. The whole molecule can be then enclosed in a sphere of minimal radius $R$ such that all atoms are inside or on the surface of the sphere. Without loss of generality, we translate the molecule so that the center of the sphere lies at the origin of the coordinate system, independently of the size of the molecule. The molecule can be systematically extended and grows roughly uniformly in all spatial directions. The volume $V$ of the sphere is proportional to the number of atoms in the molecule, and for a fixed composition of the AO basis per atom it is also proportional to the total number of AO functions in the system, $N_{\mathrm{AO}}$. For convenience, let us denote the proportionality constant between the volume $V$ and $N_{\mathrm{AO}}$ by $\frac{4\pi}{3}\rho^3$, where $\rho$ is asymptotically independent of the system size, $\rho=\mathcal{O}(1)$. The parameter $\rho$ can be interpreted as a linear density of AO functions in the system per unit length. As a consequence, the radius of the sphere is equal to $R=\rho N_{\mathrm{AO}}^{1/3}$.

The tensor of all ERI within the system is denoted by the symbol $\mathbf{T}$ here and in the following. Next, we construct a specific subtensor $\mathbf{T}_{\mathrm{sub}}$ of $\mathbf{T}$ as follows. First, we draw two spheres of radius $R/3$ with centers located on the $z$ axis at the positions $\mathbf{O}_A=(0,0,2R/3)$ and $\mathbf{O}_B=(0,0,-2R/3)$. This construction is illustrated graphically in Fig.~\ref{fig:system}. We note that the choice of the $z$ axis in this procedure is arbitrary and the subsequent discussion remains valid if the spheres $A$, $B$ are simultaneously rotated around the origin of the coordinate system in any way. 

Assume that the basis set used in the calculations includes at least one uncontracted $1s$ GTOs per atom. This is true for essentially all modern basis sets of double-zeta quality or larger available in the literature~\cite{Helgaker2000}. From each atom, we pick a single uncontracted $1s$ GTOs with the lowest exponent. This choice is arbitrary and picking the largest exponent (or any other unique selection) would work as well. Denote by $\mu_A$ and $\mu_B$ all uncontracted $1s$ GTOs selected in this way located at atoms enclosed within the sphere $A$ and $B$, respectively. The subtensor $\mathbf{T}_{\mathrm{sub}}$ is composed of all ERI of the type $(\mu_A \nu_A|\sigma_B\lambda_B)$, i.e. the first orbital pair is a product of two uncontracted $1s$ GTO AO located within the sphere $A$ and analogously the second pair of uncontracted $1s$ GTO AO within the sphere $B$. Denote the number of functions $\mu_A$ and $\mu_B$ by $N_{\mathrm{sub}}^A$ and $N_{\mathrm{sub}}^B$, respectively. Clearly, both quantities are proportional to the total number of AOs in the system, i.e. $N_{\mathrm{sub}}^A \propto N_{\mathrm{AO}}$ and $N_{\mathrm{sub}}^B \propto N_{\mathrm{AO}}$, while the dimensions of the tensor $\mathbf{T}_{\mathrm{sub}}$ are $N_{\mathrm{sub}}^A\times N_{\mathrm{sub}}^A\times N_{\mathrm{sub}}^B \times N_{\mathrm{sub}}^B$. We adopt the standard convention in quantum chemistry that the AO basis set functions are normalized to unity. Additionally, we assume that AO are purely real.

\subsection{\label{sec:preliminaries} Preliminaries and definitions}

In this section, we recall several definitions and discuss error measures in ERI decompositions which are relevant in quantum chemistry. Throughout the paper, we shall consider four-dimensional tensors, where each mode of the tensor corresponds to an index of a function from the AO basis (or its subset). We recall that the members of the AO basis will be designated by Greek letters $\mu$, $\nu$, $\sigma$, $\lambda$, \ldots If not explicitly stated otherwise, the symbol $\bf T$ designates a generic tensor with elements $T_{\mu\nu\sigma\lambda}$.

The rank of the tensor $\bf T$, written $\mathrm{rank}(\mathbf{T})$, is the minimal length $R$ of the expansion in Eq.~(\ref{eq:cpd-brutal}) such that this formula holds exactly. The requirement that the expansion introduces no errors makes this definition less useful in practice. In typical applications, an error is acceptable, provided that it can be systematically controlled. To take this into account, the notion of the effective rank is usually used in the applied mathematics literature and in other fields. The effective rank of the tensor $\bf T$, written $\mathrm{rank}_\epsilon(\mathbf{T})$, is the minimum length of the expansion $R$ in Eq.~(\ref{eq:cpd-symm}) such that:
\begin{align}
    || \mathbf{T} - \bar{\mathbf{T}} || \leq \epsilon,
\end{align}
where $||*||$ denotes the tensor norm and $\bar{\mathbf{T}}$ is the CPD approximation given in Eq.~(\ref{eq:cpd-symm}). The parameter $\epsilon\geq0$ is an arbitrary non-negative real number. We can write this definition succinctly as:
\begin{align}
\label{eq:effrank}
    \mathrm{rank}_\epsilon(\mathbf{T}) = \min_R \{ R: \exists_{\bar{\mathbf{T}}} || \mathbf{T} - \bar{\mathbf{T}} || \leq \epsilon, \;\mathrm{rank}(\bar{\mathbf{T}}) = R \},
\end{align}
and clearly $\mathrm{rank}_0(\mathbf{T}) = \mathrm{rank}(\mathbf{T})$. There are different valid definitions of the tensor norm that could lead to a different effective rank. In this work, we focus on the Frobenius norm $||*||_F$ (square root of the sum of squares of all tensor elements) as the most common choice. Although different norms, in particular the spectral norm, are useful in quantum chemistry in various contexts, all existing implementations of the ALS procedure rely on the Frobenius norm to optimize the CPD of ERI and quantify the error. Furthermore, the ERI approximation error, when measured by the Frobenius norm, systematically correlates with the energy errors in MP2~\cite{benedikt11} and other electronic structure methods~\cite{benedikt13}. For instance, a threshold of $\epsilon=10^{-2}$ constrains MP2 correlation energy errors to less than one millihartree per electron pair. Tightening this threshold to $\epsilon=10^{-3}$ reduces the errors to a negligible level from the point of view of applications requiring chemical accuracy. Consequently, an $\epsilon$ between $10^{-3}$ and $10^{-2}$ is typically sufficient for practical calculations.

In the following, we will drop the subscript $F$ from the symbol of the norm and write simply $||*||$ referring to the Frobenius norm by default. We also apply this norm retroactively in the definition given in Eq.~(\ref{eq:effrank}) and the corresponding discussion.

\subsection{\label{sec:mainth} Lower bound for the effective rank of the ERI tensor}

Let $\rho=\mathcal{O}(1)$ be the linear density of the AO in the system and $R$ the radius of the sphere that encloses the system, as defined in Sec.~\ref{sec:system}. Moreover, denote by $a_{\mathrm{min}}$ the lowest exponent of the GTO present in the atomic basis of the whole system. Then we have the following theorem:

\begin{theorem}
\label{th:main} For any value of the parameter $\epsilon$ such that:
\begin{align}\label{eq:non-empty_set}
    \delta=\frac{9R^4\,e^{-4a_{\mathrm{min}}R^2/9}}{4\rho^6\sqrt{a_{\mathrm{min}} \pi}} < \epsilon < \frac{1}{\frac{3}{2R}+2R+1},
\end{align}
the effective rank of the ERI tensor $\mathbf{T}$ is bounded from below by:
\begin{align}
    \mbox{rank}_{\epsilon-\delta}(\mathbf{T}) > 
    c\,\frac{N_{\mathrm{AO}}^2}{\log_2^7 N_{\mathrm{AO}}},
\end{align}
where $c$ is a constant that is independent of the system size.
\end{theorem}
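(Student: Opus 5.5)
We restrict to the subtensor $\mathbf{T}_{\mathrm{sub}}$ of Sec.~\ref{sec:system}, show that it is close to a structured tensor whose effective rank we can bound, and then return to $\mathbf{T}$. Two monotonicity facts do the reduction. Deleting entries cannot increase the Frobenius error, and restricting a CPD to a subset of indices does not increase its rank. Hence $\mathrm{rank}_{\epsilon-\delta}(\mathbf{T})\geq \mathrm{rank}_{\epsilon-\delta}(\mathbf{T}_{\mathrm{sub}})$. By the triangle inequality, if $\|\mathbf{T}_{\mathrm{sub}}-\mathbf{M}\|\leq\delta$ for some model tensor $\mathbf{M}$, then $\mathrm{rank}_{\epsilon-\delta}(\mathbf{T}_{\mathrm{sub}})\geq \mathrm{rank}_{\epsilon-2\delta}(\mathbf{M})$. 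In practice we will arrange the bookkeeping so that the final threshold on $\mathbf{M}$ is $\epsilon$. The whole argument therefore reduces to choosing $\mathbf{M}$ and bounding its effective rank from below.

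\textbf{Step 1 (model tensor).} The pair functions $\mu_A\nu_A$ are products of $1s$ Gaussians and are themselves Gaussians, by the Gaussian product theorem. They are centred inside sphere $A$, and the $\sigma_B\lambda_B$ pairs inside sphere $B$. The two spheres are separated by a gap of at least $2R/3$. We write $1/r_{12}$ through the Boys-function form of the $(ss|ss)$ integral, $F_0(T)=\tfrac12\sqrt{\pi/T}\,\mathrm{erf}(\sqrt T)$, and replace $\mathrm{erf}$ by $1$, i.e.\ use the bare Coulomb interaction between Gaussian charge distributions. The erfc tail is bounded using $\exp(-a_{\min}(2R/3)^2)$-type factors. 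Summing over the at most $(N_{\mathrm{sub}})^4\sim(\tfrac{4\pi}{3}\rho^3R^3)^4/\ldots$ entries should give exactly the quantity $\delta$ of Eq.~(\ref{eq:non-empty_set}), which explains the $R^4$, the $\rho^{-6}$ and the $e^{-4a_{\min}R^2/9}$. The resulting $\mathbf{M}$ has entries $S_{\mu\nu}S_{\sigma\lambda}/|\mathbf{P}_{\mu\nu}-\mathbf{Q}_{\sigma\lambda}|$. Here $S$ is the overlap prefactor and $\mathbf{P},\mathbf{Q}$ are the Gaussian product centres.

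\textbf{Step 2 (lower bound for the effective rank of $\mathbf{M}$).} $\mathbf{M}$ is essentially a diagonally scaled matrix of the kernel $1/|\mathbf{P}-\mathbf{Q}|$. The rows are indexed by the $\sim N^2$ pair centres, and the centres $\mathbf{P}_{\mu\nu}$ are distinct generic points. We unfold $\mathbf{M}$ into an $N_A^2\times N_B^2$ matrix. Any CPD of rank $r$ unfolds into a matrix of rank at most $r$, so $\mathrm{rank}_\epsilon(\mathbf{M})\geq$ the matrix $\epsilon$-rank of this unfolding. The latter equals the number of singular values exceeding a threshold determined by the tail sum $\sum_{k>r}s_k^2\leq\epsilon^2$. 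To bound it we invoke a known result on the $\epsilon$-rank of Coulomb-type kernel matrices between well-separated point clouds. One such result is the Alon-type lower bound for $\epsilon$-rank in terms of the rank of a structured submatrix. Another is a combinatorial argument in which we select a subset of $\sim N^2/\mathrm{polylog}$ pairs forming an almost-identity submatrix after scaling, and then apply the bound $\mathrm{rank}_\epsilon(I_n+E)\gtrsim \log n/(\epsilon^2\log(1/\epsilon))$. The $\log^7$ factor would come from these conversions, from the pigeonholing of points into dyadic shells, and from the geometric dependence on $R$. The upper constraint $\epsilon<1/(\tfrac{3}{2R}+2R+1)$ is what guarantees that the entries or singular values being kept actually exceed $\epsilon$. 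This quantity bounds the maximal entry and norm scale $1/\mathrm{dist}$ plus corrections.

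\textbf{Main obstacle.} The hardest step is Step 2. Smooth separated kernels are typically numerically low rank, with rank polylogarithmic in $1/\epsilon$, so a naive singular-value argument fails. The lower bound must instead exploit the fact that the $\epsilon$ allowed is large compared with the entry size, together with the $N^2$ distinct pair-index combinations in the unfolding. I would first try to show that the normalized unfolding contains a large submatrix with dominant diagonal. That submatrix is obtained by choosing pair functions whose overlap prefactors $S_{\mu\nu}$ are sharply peaked. I would then apply the Alon-type $\epsilon$-rank lower bound and track the logarithmic losses to reach $N_{\mathrm{AO}}^2/\log_2^7 N_{\mathrm{AO}}$.
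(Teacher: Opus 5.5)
Your reduction to $\mathbf{T}_{\mathrm{sub}}$ and the monopole model $S_{\mu\nu}S_{\sigma\lambda}/|\mathbf{P}_{\mu\nu}-\mathbf{Q}_{\sigma\lambda}|$ with error $\delta$ is exactly the paper's $\mathbf{T}_{\mathrm{sub}}^{(\mathrm{mon})}$ step. One slip: the triangle inequality gives $\mathrm{rank}_{\epsilon-\delta}(\mathbf{T}_{\mathrm{sub}})\geq\mathrm{rank}_{\epsilon}(\mathbf{M})$ for your $\mathbf{M}$, not a bound with $\epsilon-2\delta$.

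The gap is Step 2, and as set up it cannot succeed. In the unfolding with rows $\mu_A\nu_A$ and columns $\sigma_B\lambda_B$, the overlaps enter only as diagonal scalings, $\mathrm{diag}(S_{\mu_A\nu_A})\,\mathbf{K}\,\mathrm{diag}(S_{\sigma_B\lambda_B})$ with $K_{PQ}=1/|\mathbf{P}-\mathbf{Q}|$. Sharply peaked overlaps therefore cannot create a diagonally dominant block, since all entries of $\mathbf{K}$ lie in $[1/(2R),3/(2R)]$. Worse, $\mathbf{K}$ is exactly what the truncated bipolar multipole expansion separates. The entrywise error is at most $\frac{3}{4R}2^{-L_{\max}}$, so over the at most $N_{\mathrm{AO}}^4$ entries a choice $L_{\max}=\mathcal{O}(\log(N_{\mathrm{AO}}/\epsilon))$ pushes the Frobenius error below $\epsilon$. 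This uses only $\sum_{L\leq L_{\max}}(L+1)^2=\mathcal{O}(L_{\max}^3)$ terms $\mathcal{R}_{lm}(\mathbf{P}^{(0)})\,\mathcal{R}_{L-l,-m}(\mathbf{Q}^{(0)})$, each of rank one in your unfolding. Hence the $\epsilon$-rank of that unfolding is $\mathcal{O}(\log^3(N_{\mathrm{AO}}/\epsilon))$, and no bound of order $N_{\mathrm{AO}}^2/\log_2^7 N_{\mathrm{AO}}$ can be extracted from it. Alon-type near-identity bounds would in any case give only $\log n$. For the same reason, your explanations of the $\log^7$ factor and of the upper constraint on $\epsilon$ do not match what drives the result.

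The missing idea is to pair indices \emph{across} the two spheres and to treat the overlap and Coulomb factors separately. Write $\mathbf{T}_{\mathrm{sub}}^{(\mathrm{mon})}=\mathbf{N}\odot\mathbf{D}^{\circ-1}$ with $N_{\mu\nu\sigma\lambda}=S_{\mu_A\nu_A}S_{\sigma_B\lambda_B}$.

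Under the unfolding $(\mu_A\sigma_B)\times(\nu_A\lambda_B)$, $\mathbf{N}$ becomes $\mathbf{S}_A\otimes\mathbf{S}_B$, whose eigenvalues are products. This gives $\mathrm{rank}_\epsilon(\mathbf{N})\geq\mathrm{rank}_{\sqrt{\epsilon}}(\mathbf{S}_A)\,\mathrm{rank}_{\sqrt{\epsilon}}(\mathbf{S}_B)$. Each factor exceeds $c\,N_{\mathrm{AO}}$, because $\|\mathbf{S}\|^2\geq N$ (unit diagonal) while Gershgorin, Gaussian decay of overlaps and bounded density give $e_{\max}\leq K=\mathcal{O}(1)$.

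The Coulomb factor is then divided out by a Hadamard-product rank inequality. A rank-$r$ $\epsilon$-approximation of $\mathbf{N}\odot\mathbf{D}^{\circ-1}$, multiplied elementwise by an $\epsilon$-approximation of $\mathbf{D}$, is an $\eta$-approximation of $\mathbf{N}$ of rank at most $r\cdot\mathrm{rank}_\epsilon(\mathbf{D})$, with $\eta=[\max|\mathbf{N}\odot\mathbf{D}^{\circ-1}|+\max|\mathbf{D}|]\epsilon+\epsilon^2$.

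The polylogarithm is thus an \emph{upper} bound, $\mathrm{rank}_\epsilon(\mathbf{D})=\mathcal{O}(\log_2^7 N_{\mathrm{AO}})$, obtained from $\mathbf{D}=(\mathbf{D}\odot\mathbf{D})\odot\mathbf{D}^{\circ-1}$. Here $\mathbf{D}\odot\mathbf{D}$ has size-independent rank (at most $12M^2+10M$ for $M$ element types, via the indicator trick). $\mathbf{D}^{\circ-1}$ is expanded to threshold $\epsilon/(2R)^2$ in a bipolar series with all four indices separated, by applying solid-harmonic translation to both pair centres. That produces seven nested sums with $L_{\max}=\mathcal{O}(\log N_{\mathrm{AO}})$.

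Finally, the constraint $\epsilon<1/(\frac{3}{2R}+2R+1)$ serves only to guarantee $\eta<1$, via $\max|\mathbf{D}|\leq 2R$ and $\max|\mathbf{N}\odot\mathbf{D}^{\circ-1}|\leq 3/(2R)$. This is what allows the overlap-matrix bound to be applied to $\mathbf{N}$.
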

The remainder of this section is devoted to proving this theorem rigorously. As a corollary of Theorem~\ref{th:main} we note that, in general, the effective rank of the ERI tensor cannot grow linearly with the size of the system for any $R$.

A natural question here is whether the interval in Eq. (\ref{eq:non-empty_set}) is non-empty in practice. Although we do not prove this in full generality, the example provided in Sec.~\ref{sec:discussion} for a physical system confirms that the range is not only non-empty but actually quite wide.

We would like to point out that some parts of the text in this section are purely mathematical or technical in nature, and might be less interesting to readers focused on the practical applications. From their point of view, some parts of the proof can be skipped without compromising the overall understanding of the main findings. However, to understand the discussion of the results in Sec.~\ref{sec:discussion} and the numerical results given in Sec.~\ref{sec:numerical}, the readers may want to consider parts of the text dealing with the concept of a subtensor, Sec.~\ref{sec:subtensor}, and in particular with Sec.~\ref{sec:subprop} where the properties of the subtensor $\mathbf{T}_{\mathrm{sub}}$ introduced in Sec.~\ref{sec:system} are discussed. The whole proof revolves around $\mathbf{T}_{\mathrm{sub}}$ and its long-range behavior is the main culprit behind the rank explosion expressed through Theorem~\ref{th:main}.

\subsection{\label{sec:subtensor} Effective relative rank of a subtensor}

For completeness, we recall that a subtensor $\mathbf{T}_{\mathrm{sub}}$ is obtained by picking an arbitrary subset of indices in each mode of the original tensor $\mathbf{T}$ and forming $\mathbf{T}_{\mathrm{sub}}$ at the intersection of the selected indices. In the main proof discussed in the next section, we require a relation between the effective rank of a subtensor and that of the whole tensor. For the standard rank, $\mathrm{rank}(\mathbf{T})$, this relationship is straightforward, that is, the rank of the whole tensor cannot be lower than that of any of its subtensors, i.e. $\mathrm{rank}(\mathbf{T})\geq \mathrm{rank}(\mathbf{T}_{\mathrm{sub}})$ for every $\mathbf{T}_{\mathrm{sub}}$. This statement was proven in Ref.~\onlinecite{qi20}. Below we show that the same is true for the effective rank as summarized by the following theorem. 

\begin{theorem}
\label{th:subtensor}
Let $\mathbf{T}_{\mathrm{sub}}$ denote an arbitrary subtensor of $\mathbf{T}$. Then $\mathrm{rank}_\epsilon(\mathbf{T}) \geq \mathrm{rank}_\epsilon(\mathbf{T}_{\mathrm{sub}})$.
\end{theorem}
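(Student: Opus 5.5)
The plan is to take a near-optimal approximation of the full tensor and restrict it to the subtensor. Let $R=\mathrm{rank}_\epsilon(\mathbf{T})$, and choose $\bar{\mathbf{T}}$ with $\mathrm{rank}(\bar{\mathbf{T}})=R$ and $||\mathbf{T}-\bar{\mathbf{T}}||\leq\epsilon$. The minimum in Eq.~(\ref{eq:effrank}) runs over the natural numbers and the set is nonempty, since the exact decomposition works. Hence such a $\bar{\mathbf{T}}$ exists by definition, and no compactness or border-rank issue arises.

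Write $\bar{\mathbf{T}}$ in the form of Eq.~(\ref{eq:cpd-brutal}) as $\sum_{r=1}^R A_{\mu r}B_{\nu r}C_{\sigma r}D_{\lambda r}$. If the symmetric format Eq.~(\ref{eq:cpd-symm}) is intended, the same argument applies verbatim with the factors $X$, $Y$ restricted to the index subsets.

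Let $I_1,\dots,I_4$ be the index subsets selected in each mode to form $\mathbf{T}_{\mathrm{sub}}$. Define $\bar{\mathbf{T}}_{\mathrm{sub}}$ as the subtensor of $\bar{\mathbf{T}}$ on the same index sets. Its elements are
\begin{align}
(\bar{\mathbf{T}}_{\mathrm{sub}})_{\mu\nu\sigma\lambda}=\sum_{r=1}^R A_{\mu r}B_{\nu r}C_{\sigma r}D_{\lambda r},\qquad \mu\in I_1,\ \nu\in I_2,\ \sigma\in I_3,\ \lambda\in I_4.
\end{align}
This is a CPD of length $R$ whose factor matrices are the row-restricted $A$, $B$, $C$, $D$. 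Therefore $\mathrm{rank}(\bar{\mathbf{T}}_{\mathrm{sub}})\leq R$. This is the elementary direction of the result of Ref.~\onlinecite{qi20}.

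Next, the Frobenius norm of the restricted difference is a sum of squares over a subset of the entries of $\mathbf{T}-\bar{\mathbf{T}}$. It follows that
\begin{align}
||\mathbf{T}_{\mathrm{sub}}-\bar{\mathbf{T}}_{\mathrm{sub}}||\leq||\mathbf{T}-\bar{\mathbf{T}}||\leq\epsilon .
\end{align}
So $\bar{\mathbf{T}}_{\mathrm{sub}}$ is an admissible approximation of $\mathbf{T}_{\mathrm{sub}}$ of rank at most $R$.

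The only subtle point is that definition~(\ref{eq:effrank}) asks for an approximant of rank exactly equal to the candidate length. I will handle this by interpreting the minimum as being over ranks $\leq R$; these give the same minimum, because an approximant of rank $R'<R$ already certifies the smaller value. Under this reading, $\mathrm{rank}_\epsilon(\mathbf{T}_{\mathrm{sub}})\leq\mathrm{rank}(\bar{\mathbf{T}}_{\mathrm{sub}})\leq R=\mathrm{rank}_\epsilon(\mathbf{T})$, which is the claim.

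I expect no genuine obstacle. The step requiring care is this last one, the exact-versus-at-most rank in the definition, and a one-line remark suffices for it.
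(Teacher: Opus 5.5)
Your proof is correct and is the same argument as the paper's. You restrict a rank-$\mathrm{rank}_\epsilon(\mathbf{T})$ approximant to the chosen index sets, note that the restricted CPD has rank no larger and that the Frobenius error can only decrease, and conclude by minimality of $\mathrm{rank}_\epsilon(\mathbf{T}_{\mathrm{sub}})$. The exactly-versus-at-most concern needs no reinterpretation, since $\mathrm{rank}(\bar{\mathbf{T}}_{\mathrm{sub}})$ is itself an admissible length in Eq.~(\ref{eq:effrank}); your aside that the symmetric format~(\ref{eq:cpd-symm}) carries over ``verbatim'' is not quite right, because restricting to different index sets in the four modes breaks that form, but the proof does not rely on it.
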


\begin{proof}
To prove this theorem we denote the effective rank (with threshold $\epsilon$) of the whole tensor $\mathbf{T}$ by $K$, i.e. $\mathrm{rank}_\epsilon(\mathbf{T}) = K$. Then there exists a tensor $\bar{\mathbf{T}}$ of rank $K$ such that:
\begin{align}
\label{eq:ineq1}
    || \mathbf{T} - \bar{\mathbf{T}} || \leq \epsilon
\end{align}
Next, we pick an arbitrary subtensor $\mathbf{T}_{\mathrm{sub}}$ and denote $\mathrm{rank}_\epsilon(\mathbf{T}_{\mathrm{sub}}) = R$. The CPD of the subtensor $\mathbf{T}_{\mathrm{sub}}$ can be obtained by restricting the indices of $\bar{\mathbf{T}}$ to the selected intersection of modes. Denote such a truncation of $\bar{\mathbf{T}}$ by $\bar{\mathbf{T}}_{\mathrm{sub}}$. As $\mathrm{rank}(\bar{\mathbf{T}})=K$, we necessarily have the upper bound $\mathrm{rank}(\bar{\mathbf{T}}_{\mathrm{sub}})\leq K$. From the properties of the norm it follows that:
\begin{align}
    || \mathbf{T}_{\mathrm{sub}} - \bar{\mathbf{T}}_{\mathrm{sub}} || \leq || \mathbf{T} - \bar{\mathbf{T}} || \leq \epsilon.
\end{align}
This shows that $\bar{\mathbf{T}}_{\mathrm{sub}}$ is a tensor with $\mathrm{rank}(\bar{\mathbf{T}}_{\mathrm{sub}})\leq K$ and an error with respect to $\mathbf{T}_{\mathrm{sub}}$ smaller than or equal to $\epsilon$. However, from the definition of effective rank in Eq.~(\ref{eq:effrank}) and the fact that $\mathrm{rank}_\epsilon(\mathbf{T}_{\mathrm{sub}}) = R$, we conclude that $R$ is the lowest possible rank of any tensor that meets this condition. Therefore, $\mathrm{rank}(\bar{\mathbf{T}}_{\mathrm{sub}})\geq R$ which also implies that $K\geq R$ and this completes the proof.
\end{proof}

\subsection{\label{sec:hadamard} Effective rank of the Hadamard product}

The Hadamard product of two tensors $\mathbf{T}_1$ and $\mathbf{T}_2$ of the same dimension in each mode, denoted by the symbol $\mathbf{T}_1\odot\mathbf{T}_2$, returns a tensor whose elements are products of the corresponding elements of the original tensors. In subsequent derivations, we will encounter some key tensors that have such a form. In anticipation of that, we have to quantify how the effective rank of a tensor being a Hadamard product is related to the effective rank of the constituting tensors. 

In the following two theorems we summarize key results related to this problem that are useful later. However, before stating these theorems, we recall two inequalities used in their proofs that are valid for the norm of the Hadamard product of tensors. The first is the usual submultiplicative property which follows from the Cauchy–Schwarz inequality applied elementwise~\cite{Horn_Johnson_1985_ch5}:
\begin{align}
\label{eq:hada1}
    || \mathbf{A} \odot \mathbf{B} || \leq || \mathbf{A} ||\cdot || \mathbf{B} ||,
\end{align}
for any tensors $\mathbf{A}$ and $\mathbf{B}$ of the same dimension in each mode. This bound is usually rather loose, and in some situations it is advantageous to apply a somewhat stronger bound:
\begin{align}
\label{eq:hada2}
    || \mathbf{A} \odot \mathbf{B} || \leq \mbox{max}(|\mathbf{A}|)\cdot || \mathbf{B} ||,
\end{align}
where the symbol $\mbox{max}(|\mathbf{A}|)$ denotes the maximum absolute element of the tensor $\mathbf{A}$, i.e. $\mbox{max}(|\mathbf{A}|)=\max_{\mu\nu\sigma\lambda} |A_{\mu\nu\sigma\lambda}|$. The proof of this statement is a straightforward generalization of the corresponding results for the norm of a matrix discussed in Ref.~\onlinecite{Horn_Johnson_1985_ch5}. Although the proofs given below rely on the properties of the Frobenius norm, the same arguments should extend to other norms, provided that they satisfy inequalities (\ref{eq:hada1}), (\ref{eq:hada2}).

\begin{theorem}
\label{th:hada1}
Let $\mathbf{T}_1$ and $\mathbf{T}_2$ be two tensors of identical dimensions in each mode. Then $\mathrm{rank}_\eta(\mathbf{T}_1\odot\mathbf{T}_2) \leq \mathrm{rank}_\epsilon(\mathbf{T}_1)\cdot\mathrm{rank}_\epsilon(\mathbf{T}_2)$, where $\eta=\big[\mbox{max}(|\mathbf{T}_1|)+\mbox{max}(|\mathbf{T}_2|)\big]\epsilon+\epsilon^2$, for any $\epsilon\geq0$.
\end{theorem}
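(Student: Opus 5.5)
The plan is to take optimal approximations of each factor, form their Hadamard product, and bound both its rank and its error. Let $K_1=\mathrm{rank}_\epsilon(\mathbf{T}_1)$ and $K_2=\mathrm{rank}_\epsilon(\mathbf{T}_2)$. By definition~(\ref{eq:effrank}) there exist tensors $\bar{\mathbf{T}}_1$, $\bar{\mathbf{T}}_2$ with $\mathrm{rank}(\bar{\mathbf{T}}_i)=K_i$ and $||\mathbf{T}_i-\bar{\mathbf{T}}_i||\leq\epsilon$. The candidate approximation of $\mathbf{T}_1\odot\mathbf{T}_2$ is $\bar{\mathbf{T}}_1\odot\bar{\mathbf{T}}_2$. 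It then suffices to show two things: that $\mathrm{rank}(\bar{\mathbf{T}}_1\odot\bar{\mathbf{T}}_2)\leq K_1K_2$, and that $||\mathbf{T}_1\odot\mathbf{T}_2-\bar{\mathbf{T}}_1\odot\bar{\mathbf{T}}_2||\leq\eta$. Given both, the definition of effective rank as a minimum yields $\mathrm{rank}_\eta(\mathbf{T}_1\odot\mathbf{T}_2)\leq K_1K_2$. If the actual rank is smaller than $K_1K_2$, that only helps, since having some tensor of rank at most $K_1K_2$ within $\eta$ bounds the minimum.

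For the rank bound, write $\bar{\mathbf{T}}_1=\sum_{r}a_r\otimes b_r\otimes c_r\otimes d_r$ and $\bar{\mathbf{T}}_2=\sum_{s}a'_s\otimes b'_s\otimes c'_s\otimes d'_s$. Elementwise multiplication of rank-one tensors gives
\begin{align}
(a_r\otimes b_r\otimes c_r\otimes d_r)\odot(a'_s\otimes b'_s\otimes c'_s\otimes d'_s)=(a_r\odot a'_s)\otimes(b_r\odot b'_s)\otimes(c_r\odot c'_s)\otimes(d_r\odot d'_s),
\end{align}
which is again rank one. Expanding the product bilinearly therefore expresses $\bar{\mathbf{T}}_1\odot\bar{\mathbf{T}}_2$ as a sum of $K_1K_2$ rank-one terms indexed by the pairs $(r,s)$.

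For the error bound, set $\mathbf{E}_i=\bar{\mathbf{T}}_i-\mathbf{T}_i$, so that $||\mathbf{E}_i||\leq\epsilon$. Expanding,
\begin{align}
\bar{\mathbf{T}}_1\odot\bar{\mathbf{T}}_2-\mathbf{T}_1\odot\mathbf{T}_2=\mathbf{T}_1\odot\mathbf{E}_2+\mathbf{E}_1\odot\mathbf{T}_2+\mathbf{E}_1\odot\mathbf{E}_2 .
\end{align}
Apply the triangle inequality to this sum. On the first two terms use the stronger bound~(\ref{eq:hada2}), taking the max over the exact tensor; this gives at most $\max(|\mathbf{T}_1|)\,\epsilon+\max(|\mathbf{T}_2|)\,\epsilon$. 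On the cross term use the submultiplicative bound~(\ref{eq:hada1}), which gives $||\mathbf{E}_1\odot\mathbf{E}_2||\leq\epsilon^2$. Summing these reproduces exactly $\eta$.

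There is no real obstacle here. The one point needing care is how the product is split: the maxima must fall on the exact tensors $\mathbf{T}_i$ rather than on their approximations $\bar{\mathbf{T}}_i$, because the latter would bring in uncontrolled quantities. The three-term expansion above is arranged precisely to achieve this.
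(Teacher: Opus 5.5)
Your proof is correct and follows the paper's argument essentially step for step. You use the same candidate $\bar{\mathbf{T}}_1\odot\bar{\mathbf{T}}_2$ and the same three-term splitting of the error, with the bound~(\ref{eq:hada2}) on the two linear terms (maxima over the exact tensors) and~(\ref{eq:hada1}) on the cross term. The only difference is that you verify $\mathrm{rank}(\bar{\mathbf{T}}_1\odot\bar{\mathbf{T}}_2)\leq K_1K_2$ explicitly via products of rank-one terms, whereas the paper cites the matrix result and appeals to distributivity.
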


\begin{proof}
From the statement of the theorem there exist two tensors $\bar{\mathbf{T}}_1$ and $\bar{\mathbf{T}}_2$ such that $||\mathbf{T}_1-\bar{\mathbf{T}}_1||\leq\epsilon$ and $||\mathbf{T}_2-\bar{\mathbf{T}}_2||\leq\epsilon$. Denote their ranks by $\mbox{rank}(\bar{\mathbf{T}}_1)=R_1$ and $\mbox{rank}(\bar{\mathbf{T}}_2)=R_2$.The rank bound for the Hadamard product is known for matrices\cite{Horn_Johnson_1991_ch5} and generalizes to tensors by the distributive law, giving $\mbox{rank}(\bar{\mathbf{T}}_1\odot \bar{\mathbf{T}}_2)\leq R_1R_2$. Consider the tensor $\bar{\mathbf{T}}_1\odot\bar{\mathbf{T}}_2$ for which we have:
\begin{align*}
    &||\mathbf{T}_1\odot\mathbf{T}_2 - \bar{\mathbf{T}}_1\odot\bar{\mathbf{T}}_2|| = \\
    &||\mathbf{T}_2\odot(\mathbf{T}_1-\bar{\mathbf{T}}_1) + 
    \mathbf{T}_1\odot(\mathbf{T}_2-\bar{\mathbf{T}}_2) +
    (\bar{\mathbf{T}}_1-\mathbf{T}_1)\odot(\mathbf{T}_2-\bar{\mathbf{T}}_2) || \\
    &\leq ||\mathbf{T}_2\odot(\mathbf{T}_1-\bar{\mathbf{T}}_1)|| + 
    ||\mathbf{T}_1\odot(\mathbf{T}_2-\bar{\mathbf{T}}_2) || \\
    &+ ||(\bar{\mathbf{T}}_1-\mathbf{T}_1)\odot(\mathbf{T}_2-\bar{\mathbf{T}}_2) ||,
\end{align*}
where in the last step, we have used the triangle inequality for the norm twice. Next, for the last term we apply Eq.~(\ref{eq:hada1}) which leads to:
\begin{align*}
    ||(\bar{\mathbf{T}}_1-\mathbf{T}_1)\odot(\mathbf{T}_2-\bar{\mathbf{T}}_2) || \leq
    ||(\bar{\mathbf{T}}_1-\mathbf{T}_1)||\cdot ||(\mathbf{T}_2-\bar{\mathbf{T}}_2) || \leq 
    \epsilon^2.
\end{align*}
For the first term, we apply the stronger bound defined in Eq.~(\ref{eq:hada2}) giving:
\begin{align*}
    ||\mathbf{T}_2\odot(\mathbf{T}_1-\bar{\mathbf{T}}_1)|| \leq
    \mbox{max}(|\mathbf{T}_2|)\cdot || \mathbf{T}_1-\bar{\mathbf{T}}_1 || \leq
    \mbox{max}(|\mathbf{T}_2|)\,\epsilon,
\end{align*}
and analogously for the second term. Inserting these bounds into the previous formula gives
\begin{align*}
    &||\mathbf{T}_1\odot\mathbf{T}_2 - \bar{\mathbf{T}}_1\odot\bar{\mathbf{T}}_2|| \leq 
    \mbox{max}(|\mathbf{T}_1|)\epsilon + \mbox{max}(|\mathbf{T}_2|)\epsilon + \epsilon^2.
\end{align*}
As a result, we have shown that $\bar{\mathbf{T}}_1\odot\bar{\mathbf{T}}_2$ is a tensor with rank no greater than $R_1 R_2$ which is an approximation of $\mathbf{T}_1\odot\mathbf{T}_2$ with error smaller than or equal to $\eta=\big[\mbox{max}(|\mathbf{T}_1|) + \mbox{max}(|\mathbf{T}_2|)\big]\epsilon + \epsilon^2$. This leads to the conclusion that $\mbox{rank}_\eta(\mathbf{T}_1\odot\mathbf{T}_2)\leq R_1R_2$ and this completes the proof.
\end{proof}

\begin{theorem}
\label{th:hada2}
Let $\mathbf{T}_1$ and $\mathbf{T}_2$ be two tensors of identical dimensions in each mode. Assume that $\mathbf{T}_2$ is dense, i.e. it has no zero elements, and hence its elementwise inverse $\mathbf{T}_2^{\circ -1}$ exists and has finite elements. Then $\mathrm{rank}_\epsilon(\mathbf{T}_1\odot\mathbf{T}_2^{-1}) \geq \frac{\mathrm{rank}_\eta(\mathbf{T}_1)}{\mathrm{rank}_\epsilon(\mathbf{T}_2)}$, where $\eta=\big[\mbox{max}(|\mathbf{T}_1\odot\mathbf{T}_2^{-1}|) + \mbox{max}(|\mathbf{T}_2|)\big]\epsilon+\epsilon^2$, for any $\epsilon\geq0$.
\end{theorem}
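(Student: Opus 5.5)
This follows directly from Theorem~\ref{th:hada1}, applied to the pair $\mathbf{S}=\mathbf{T}_1\odot\mathbf{T}_2^{\circ -1}$ and $\mathbf{T}_2$. The key identity is
\begin{align*}
\mathbf{S}\odot\mathbf{T}_2=\mathbf{T}_1 .
\end{align*}
It holds elementwise because $\mathbf{T}_2$ is dense, so every element of $\mathbf{T}_2^{\circ -1}$ is finite and $(T_1)_{\mu\nu\sigma\lambda}(T_2)_{\mu\nu\sigma\lambda}^{-1}(T_2)_{\mu\nu\sigma\lambda}=(T_1)_{\mu\nu\sigma\lambda}$. Once this identity is in place, the lower bound on the rank of $\mathbf{S}$ is the upper bound of Theorem~\ref{th:hada1} rearranged.

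First, apply Theorem~\ref{th:hada1} with $\mathbf{S}$ in the role of the first tensor and $\mathbf{T}_2$ in the role of the second, using the same threshold $\epsilon$. This gives
\begin{align*}
\mathrm{rank}_{\eta}(\mathbf{S}\odot\mathbf{T}_2)\leq \mathrm{rank}_\epsilon(\mathbf{S})\cdot\mathrm{rank}_\epsilon(\mathbf{T}_2),
\end{align*}
with $\eta=\big[\mbox{max}(|\mathbf{S}|)+\mbox{max}(|\mathbf{T}_2|)\big]\epsilon+\epsilon^2$. This is exactly the $\eta$ in the statement, since $\mbox{max}(|\mathbf{S}|)=\mbox{max}(|\mathbf{T}_1\odot\mathbf{T}_2^{-1}|)$. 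Substituting $\mathbf{S}\odot\mathbf{T}_2=\mathbf{T}_1$ turns the left-hand side into $\mathrm{rank}_\eta(\mathbf{T}_1)$.

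Next, divide both sides by $\mathrm{rank}_\epsilon(\mathbf{T}_2)$. For this to be legitimate, $\mathrm{rank}_\epsilon(\mathbf{T}_2)$ must be positive. It is a non-negative integer, and it equals zero only if $||\mathbf{T}_2||\leq\epsilon$, in which case the zero tensor is an admissible approximation. I would treat this degenerate case separately: either adopt the convention that the bound is vacuous (the right-hand side is $+\infty$ or undefined), or note that the statement implicitly assumes $\epsilon<||\mathbf{T}_2||$. In the nondegenerate case the division gives $\mathrm{rank}_\epsilon(\mathbf{S})\geq \mathrm{rank}_\eta(\mathbf{T}_1)/\mathrm{rank}_\epsilon(\mathbf{T}_2)$, which is the claim.

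I expect no serious obstacle; the proof is bookkeeping. The two points needing care are that the elementwise inverse is finite, which follows from density and makes the identity $\mathbf{S}\odot\mathbf{T}_2=\mathbf{T}_1$ valid, and the degenerate zero-rank case above. I would also remark briefly that Theorem~\ref{th:hada1} requires only that both factors admit approximations within $\epsilon$ of ranks $\mathrm{rank}_\epsilon(\mathbf{S})$ and $\mathrm{rank}_\epsilon(\mathbf{T}_2)$, which holds by the definition in Eq.~(\ref{eq:effrank}). That definition takes a minimum over $R$, and the minimum is attained because the ranks are integers.
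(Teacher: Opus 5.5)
Your proposal is correct and matches the paper's proof: the paper likewise writes $\mathbf{T}_1=\mathbf{T}_1\odot\mathbf{T}_2^{\circ-1}\odot\mathbf{T}_2$, applies Theorem~\ref{th:hada1} to the pair $(\mathbf{T}_1\odot\mathbf{T}_2^{\circ-1},\mathbf{T}_2)$ with threshold $\epsilon$, and divides by $\mathrm{rank}_\epsilon(\mathbf{T}_2)$. Your degenerate-case remark goes beyond the paper, but your suggested fix needs correcting. When $\mathrm{rank}_\epsilon(\mathbf{T}_2)=0$ one also has $\|\mathbf{T}_1\|\le\max(|\mathbf{T}_1\odot\mathbf{T}_2^{\circ-1}|)\,\|\mathbf{T}_2\|\le\eta$, so $\mathrm{rank}_\eta(\mathbf{T}_1)=0$ and the quotient is $0/0$. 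A $+\infty$ convention would therefore make the bound false, not vacuous. The clean statement covering all cases is the multiplicative form $\mathrm{rank}_\eta(\mathbf{T}_1)\le\mathrm{rank}_\epsilon(\mathbf{T}_1\odot\mathbf{T}_2^{\circ-1})\cdot\mathrm{rank}_\epsilon(\mathbf{T}_2)$.
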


\begin{proof}
We begin by writing
\begin{align*}
    \mathbf{T}_1 = \mathbf{T}_1 \odot \mathbf{T}_2^{\circ -1} \odot \mathbf{T}_2,
\end{align*}
from which follows that for any $\eta$ we have
\begin{align*}
    \mathrm{rank}_\eta(\mathbf{T}_1) = \mathrm{rank}_\eta(\mathbf{T}_1 \odot \mathbf{T}_2^{\circ -1} \odot \mathbf{T}_2).
\end{align*}
Let us now set $\eta=\big[\mbox{max}(|\mathbf{T}_1\odot\mathbf{T}_2^{\circ -1}|) + \mbox{max}(|\mathbf{T}_2|)\big]\epsilon+\epsilon^2$. From Theorem~\ref{th:hada1} we then have the following inequality:
\begin{align*}
    \mathrm{rank}_\eta(\mathbf{T}_1 \odot \mathbf{T}_2^{\circ-1} \odot \mathbf{T}_2) \leq \mathrm{rank}_\epsilon(\mathbf{T}_1\odot\mathbf{T}_2^{\circ-1})\cdot\mathrm{rank}_\epsilon(\mathbf{T}_2),
\end{align*}
and upon combining with the previous equation we find:
\begin{align}
    \mathrm{rank}_\eta(\mathbf{T}_1) \leq \mathrm{rank}_\epsilon(\mathbf{T}_1\odot\mathbf{T}_2^{\circ-1})\cdot\mathrm{rank}_\epsilon(\mathbf{T}_2).
\end{align}
By rearranging this inequality we arrive at:
\begin{align}
    \mathrm{rank}_\epsilon(\mathbf{T}_1\odot\mathbf{T}_2^{\circ-1}) \geq \frac{\mathrm{rank}_\eta(\mathbf{T}_1)}{\mathrm{rank}_\epsilon(\mathbf{T}_2)},
\end{align}
which is exactly the statement of the proof.
\end{proof}

\subsection{\label{sec:subprop} Properties of the subtensor $\mathbf{T}_{\mathrm{sub}}$}

In previous sections, we have defined the four-dimensional subtensor $\mathbf{T}_{\mathrm{sub}}$ that will be analyzed here. In particular, our goal is to find a lower bound for the effective rank of this tensor, $\mathrm{rank}_\epsilon(\mathbf{T}_{\mathrm{sub}})$, as a function of the system size. Once this bound is known, we can invoke Theorem~\ref{th:subtensor} stating that $\mathrm{rank}_\epsilon(\mathbf{T})\geq \mathrm{rank}_\epsilon(\mathbf{T}_{\mathrm{sub}})$. As a result, the rank of the entire tensor will also be bounded from below by the same quantity as $\mathrm{rank}_\epsilon(\mathbf{T}_{\mathrm{sub}})$. The rationale for this approach is based on the fact that while the rank of the entire tensor $\mathbf{T}$ appears extremely difficult to assess, much more can be derived about the rank of the particular subtensor $\mathbf{T}_{\mathrm{sub}}$, as shown next.

Recall that the tensor $\mathbf{T}_{\mathrm{sub}}$ has elements $(\mu_A \nu_A|\sigma_B\lambda_B)$, where all GTOs are of the uncontracted $1s$ type. Such integrals can be evaluated analytically. Let us denote the positions at which the AO are located by $\mathbf{R}_\mu$ and their exponents by $a_\mu$. The integrals in question read:
\begin{align}
\label{eq:tsub}
\begin{split}
    &(\mu_A \nu_A|\sigma_B\lambda_B) = S_{\mu_A \nu_A}\,S_{\sigma_B\lambda_B} \\
    &\times \frac{\mbox{erf}\Big(\sqrt{\frac{a_{\mu_A\nu_A}\,a_{\sigma_B\lambda_B}}{a_{\mu_A\nu_A}+a_{\sigma_B\lambda_B}}} |\mathbf{R}_{\mu_A\nu_A}-\mathbf{R}_{\sigma_B\lambda_B}|\Big)}{|\mathbf{R}_{\mu_A\nu_A}-\mathbf{R}_{\sigma_B\lambda_B}|},
\end{split}
\end{align}
where $\mbox{erf}(x)$ is the error function, $a_{\mu\nu}=a_\mu + a_\nu$, $\mathbf{R}_{\mu\nu} = \frac{a_\mu \mathbf{R}_\mu + a_\nu \mathbf{R}_\nu}{a_\mu + a_\nu}$, and $S_{\mu_A \nu_A}$, $S_{\sigma_B\lambda_B}$ are the standard overlap integrals between AO in the spheres $A$ and $B$, respectively.

As the separation between the centers $\mathbf{R}_{\mu\nu}$ and $\mathbf{R}_{\sigma\lambda}$ increases, the error function in the above formula tends to unity. For a sufficiently large size of the spheres $A$ and $B$ the integrals $(\mu_A \nu_A|\sigma_B\lambda_B)$ can be approximated as:
\begin{align}
\label{eq:tmon}
    (\mu_A \nu_A|\sigma_B\lambda_B) \approx (\mu_A \nu_A|\sigma_B\lambda_B)_{\mathrm{mon}} = 
    \frac{S_{\mu_A \nu_A}\,S_{\sigma_B\lambda_B}}{|\mathbf{R}_{\mu_A\nu_A}-\mathbf{R}_{\sigma_B\lambda_B}|},
\end{align}
which is simply the interaction between monopoles (in the language of the multipole expansion). Let us denote by $\mathbf{T}_{\mathrm{sub}}^{\mathrm{(mon)}}$ the tensor with the elements $(\mu_A \nu_A|\sigma_B\lambda_B)_{\mathrm{mon}}$. Observe that according to Eq.~(\ref{eq:tmon}), $\mathbf{T}_{\mathrm{sub}}^{\mathrm{(mon)}}$ is a Hadamard product of two tensors. Let us denote the tensor in the numerator of Eq.~(\ref{eq:tmon}) with elements $S_{\mu_A \nu_A}\,S_{\sigma_B\lambda_B}$ by $\mathbf{N}$. Similarly, the tensor with elements $|\mathbf{R}_{\mu_A\nu_A}-\mathbf{R}_{\sigma_B\lambda_B}|$ in the denominator of Eq.~(\ref{eq:tmon}) will be denoted by $\mathbf{D}$.

For the purposes of the subsequent discussion, we have to find the upper bound for the norm of the difference between the tensors $\mathbf{T}_{\mathrm{sub}}^{\mathrm{(mon)}}$ and $\mathbf{T}_{\mathrm{sub}}$. First, note that the difference between $\mathbf{T}_{\mathrm{sub}}^{\mathrm{(mon)}}$ and $\mathbf{T}_{\mathrm{sub}}$ can be written as the Hadamard product:
\begin{align}
    \mathbf{T}_{\mathrm{sub}}^{\mathrm{(mon)}} - \mathbf{T}_{\mathrm{sub}} = \mathbf{E}\odot \mathbf{N},
\end{align}
where the tensor $\mathbf{N}$ has been defined above and the tensor $\mathbf{E}$ has the elements:
\begin{align}
\label{eq:etensor}
    E_{\mu\nu\sigma\lambda} = \frac{\mbox{erfc}\Big(\sqrt{\frac{a_{\mu\nu}\,a_{\sigma\lambda}}{a_{\mu\nu}+a_{\sigma\lambda}}} |\mathbf{R}_{\mu_A\nu_A}-\mathbf{R}_{\sigma_B\lambda_B}|\Big)}{|\mathbf{R}_{\mu_A\nu_A}-\mathbf{R}_{\sigma_B\lambda_B}|},
\end{align}
where we have used the standard definition $\mbox{erfc}(x)=1-\mbox{erf}(x)$. Applying the inequality given in Eq.~(\ref{eq:hada2}) we obtain:
\begin{align}
\label{eq:boundmon}
    || \mathbf{T}_{\mathrm{sub}}^{\mathrm{(mon)}} - \mathbf{T}_{\mathrm{sub}} || = 
    || \mathbf{E}\odot \mathbf{N} || \leq \mbox{max}(|\mathbf{E}|)\cdot ||\mathbf{N}||.
\end{align}
The upper bound for $||\mathbf{N}||$ can be established using the Schwarz inequality and the fact that the AO basis functions are normalized to unity:
\begin{align*}
    &||\mathbf{N}||^2 = 
    \sum_{\mu_A \nu_A\sigma_B\lambda_B}S^2_{\mu_A \nu_A}\,S^2_{\sigma_B\lambda_B} \leq \\
    &\sum_{\mu_A \nu_A\sigma_B\lambda_B} S_{\mu_A\mu_A}\,S_{\nu_A\nu_A}\,S_{\sigma_B\sigma_B}\,S_{\lambda_B\lambda_B} = \\
    &\sum_{\mu_A \nu_A\sigma_B\lambda_B} 1 = 
    (N_{\mathrm{sub}}^A)^2 \cdot (N_{\mathrm{sub}}^B)^2 \leq N_{\mathrm{AO}}^4,
\end{align*}
hence $||\mathbf{N}||\leq N_{\mathrm{AO}}^2$. This bound is obviously very loose but it is sufficient for the present purposes. It is also useful to rewrite this inequality explicitly using the dimension of the system, $R=\rho N_{\mathrm{AO}}^{1/3}$, giving $||\mathbf{N}||\leq R^6/\rho^6$.

Next, we consider the maximum element $\mbox{max}(|\mathbf{E}|)$ of the tensor $\mathbf{E}$. The function $\mbox{erfc}(x)$ is decreasing in its whole domain, so to find its maximum value we have to determine the minimum value of the argument $\sqrt{\frac{a_{\mu\nu}\,a_{\sigma\lambda}}{a_{\mu\nu}+a_{\sigma\lambda}}} |\mathbf{R}_{\mu_A\nu_A}-\mathbf{R}_{\sigma_B\lambda_B}|$. Let us denote the lowest exponent of the $1s$ function in the basis by $a_{\mathrm{min}}$ such that $a_\mu\geq a_{\mathrm{min}}$ for every $\mu$. From this follows $a_{\mu\nu} \geq 2a_{\mathrm{min}}$ and $\frac{a_{\mu\nu}\,a_{\sigma\lambda}}{a_{\mu\nu}+a_{\sigma\lambda}}\geq a_{\mathrm{min}}$. Simultaneously, the quantity $|\mathbf{R}_{\mu_A\nu_A}-\mathbf{R}_{\sigma_B\lambda_B}|$ is also bounded from below, because the points $\mathbf{R}_{\mu\nu}$ and $\mathbf{R}_{\sigma\lambda}$ are located within the two non-overlapping spheres $A$ and $B$. The closest distance between two points located anywhere within or on the surface of the spheres is $2R/3$ which leads to the bound $|\mathbf{R}_{\mu_A\nu_A}-\mathbf{R}_{\sigma_B\lambda_B}|\geq 2R/3$. Putting this together, we find:
\begin{align}
    \sqrt{\frac{a_{\mu\nu}\,a_{\sigma\lambda}}{a_{\mu\nu}+a_{\sigma\lambda}}} |\mathbf{R}_{\mu_A\nu_A}-\mathbf{R}_{\sigma_B\lambda_B}| \geq \frac{2R\sqrt{a_{\mathrm{min}}}}{3}.
\end{align}
Additionally, employing the upper bound for the complementary error function derived in Ref.~\onlinecite{karagiannidis07}, namely $\mbox{erfc}(x)\leq\frac{e^{-x^2}}{x\sqrt{\pi}}$ for $x>0$, we find in total:
\begin{align}
    \mbox{max}(|\mathbf{E}|) \leq 
    \frac{9e^{-4a_{\mathrm{min}}R^2/9}}{4R^2\sqrt{a_{\mathrm{min}} \pi}},
\end{align}
where we have taken care of the denominator in Eq.~(\ref{eq:etensor}) using the inequality $|\mathbf{R}_{\mu_A\nu_A}-\mathbf{R}_{\sigma_B\lambda_B}|\geq 2R/3$, see above. Combining this result with the bound for $||\mathbf{N}||$ and Eq.~(\ref{eq:boundmon}) we finally obtain:
\begin{align}
    || \mathbf{T}_{\mathrm{sub}}^{\mathrm{(mon)}} - \mathbf{T}_{\mathrm{sub}} || \leq
    \frac{9R^4\,e^{-4a_{\mathrm{min}}R^2/9}}{4\rho^6\sqrt{a_{\mathrm{min}} \pi}} = \delta,
\end{align}
where we have used the symbol $\delta$ as a shorthand for the right side of the inequality. Critically, the parameter $\delta$ vanishes exponentially as a function of $R$.

Now, assume that we have obtained a CPD approximation $\bar{\mathbf{T}}_{\mathrm{sub}}$ of $\mathbf{T}_{\mathrm{sub}}$ such that for some $\epsilon>0$ we have $\mbox{rank}_\epsilon(\mathbf{T}_{\mathrm{sub}})=\mbox{rank}(\bar{\mathbf{T}}_{\mathrm{sub}}) = R$. Consider the norm of the difference $||\mathbf{T}_{\mathrm{sub}}^{\mathrm{(mon)}}-\bar{\mathbf{T}}_{\mathrm{sub}}||$. From the triangle inequality, we get:
\begin{align}
\begin{split}
    &||\mathbf{T}_{\mathrm{sub}}^{\mathrm{(mon)}} - \bar{\mathbf{T}}_{\mathrm{sub}}|| =
    ||\mathbf{T}_{\mathrm{sub}}^{\mathrm{(mon)}} - \mathbf{T}_{\mathrm{sub}} +
    \mathbf{T}_{\mathrm{sub}} - \bar{\mathbf{T}}_{\mathrm{sub}}|| \\
    &\leq ||\mathbf{T}_{\mathrm{sub}}^{\mathrm{(mon)}} - \mathbf{T}_{\mathrm{sub}} ||
    + ||\mathbf{T}_{\mathrm{sub}} - \bar{\mathbf{T}}_{\mathrm{sub}}||.
\end{split}
\end{align}
According to the above discussion, the first term is bounded by $\delta$ and the second by $\epsilon$. This leads to:
\begin{align}
    ||\mathbf{T}_{\mathrm{sub}}^{\mathrm{(mon)}} - \bar{\mathbf{T}}_{\mathrm{sub}}|| 
    \leq \epsilon + \delta,
\end{align}
showing that $\bar{\mathbf{T}}_{\mathrm{sub}}$ is also an approximation to $\mathbf{T}_{\mathrm{sub}}^{\mathrm{(mon)}}$ with error no larger than $\epsilon + \delta$. This leads to the conclusion that the rank of $\mathbf{T}_{\mathrm{sub}}^{\mathrm{(mon)}}$ is necessarily bounded:
\begin{align}
    \mbox{rank}_{\epsilon+\delta}(\mathbf{T}_{\mathrm{sub}}^{\mathrm{(mon)}}) \leq
    R = \mbox{rank}(\bar{\mathbf{T}}_{\mathrm{sub}}) = \mbox{rank}_\epsilon(\mathbf{T}_{\mathrm{sub}}).
\end{align}
A useful formula is obtained by rewriting this inequality as:
\begin{align}
    \mbox{rank}_{\epsilon-\delta}(\mathbf{T}_{\mathrm{sub}}) \geq 
    \mbox{rank}_{\epsilon}(\mathbf{T}_{\mathrm{sub}}^{\mathrm{(mon)}}).
\end{align}
which holds as long as $\epsilon>\delta$. This is a manifestation of a stronger condition that the effective rank is continuous in the sense that tensors that are sufficiently close to each other have the same rank. As a side note, the same is not true for the mathematical rank -- there are known counterexamples of sequences of tensors of rank $2$ that tend to a tensor of rank $3$ in the limit of the sequence~\cite{de08,paatero00}. Using the above formula, we can now focus on the effective rank of the tensor $\mathbf{T}_{\mathrm{sub}}^{\mathrm{(mon)}}$.

Let us now consider the effective rank of the tensor $\mathbf{T}_{\mathrm{sub}}^{\mathrm{(mon)}}$. According to the above discussion, this tensor can be written as the Hadamard product $\mathbf{T}_{\mathrm{sub}}^{\mathrm{(mon)}} = \mathbf{N}\odot \mathbf{D}^{\circ-1}$. As the tensor $\mathbf{D}$ is dense, i.e. it contains no zero elements, we can apply Theorem~\ref{th:hada2} to this form, giving:
\begin{align}
    \mathrm{rank}_\epsilon(\mathbf{N}\odot \mathbf{D}^{\circ-1}) \geq \frac{\mathrm{rank}_\eta(\mathbf{N})}{\mathrm{rank}_\epsilon(\mathbf{D})},
\end{align}
where $\eta=\big[\mbox{max}(|\mathbf{N}\odot \mathbf{D}^{\circ-1}|)+\mbox{max}(|\mathbf{D}|)\big]\epsilon+\epsilon^2$. In summary, the main result of this section is the following chain of inequalities:
\begin{align}
    \mbox{rank}_{\epsilon-\delta}(\mathbf{T}) \geq 
    \mbox{rank}_{\epsilon-\delta}(\mathbf{T}_{\mathrm{sub}}) \geq 
    \mbox{rank}_{\epsilon}(\mathbf{T}_{\mathrm{sub}}^{\mathrm{(mon)}}) \geq
    \frac{\mathrm{rank}_\eta(\mathbf{N})}{\mathrm{rank}_\epsilon(\mathbf{D})}.
\end{align}
This allows us to focus on the effective ranks of the tensors $\mathbf{N}$ and $\mathbf{D}$ further in the text. Once the bounds are known for them, we can immediately infer the implications for the full tensor $\mathbf{T}$ of ERI.

\subsection{\label{sec:tensorn} Effective rank of the tensor $\mathbf{N}$}

In this section, we consider the effective rank of the tensor $\mathbf{N}$ with elements $N_{\mu_A \nu_A\sigma_B\lambda_B} = S_{\mu_A \nu_A}\,S_{\sigma_B\lambda_B}$. First, we note that for any threshold $\epsilon$, the effective rank of the tensor $\mathbf{N}$ cannot be lower than the effective rank of the matrix $\mathbf{M}$ defined as $M_{\mu_A\sigma_B,\nu_A\lambda_B} = S_{\mu_A \nu_A}\,S_{\sigma_B\lambda_B}$ with compound indices $\mu_A\sigma_B$ and $\nu_A\lambda_B$. To demonstrate this, let us assume that the effective rank of the matrix $\mathbf{M}$ is equal to $K$, i.e. $\mathrm{rank}_\epsilon(\mathbf{M})=K$. This means that $K$ is the lowest possible length of the expansion:
\begin{align}
    \bar{M}_{\mu_A\sigma_B,\nu_A\lambda_B} = \sum_s^K U_{\mu_A\sigma_B}^s\,V_{\nu_A\lambda_B}^s,
\end{align}
that fulfills the condition $||\mathbf{M}-\bar{\mathbf{M}}||\leq\epsilon$. On the other hand, denote by $R$ the effective rank of the tensor $\mathbf{N}$ with the threshold $\epsilon$. This means that there exists a CPD approximation $\bar{\mathbf{N}}$ in the form:
\begin{align}
    \bar{N}_{\mu_A \nu_A\sigma_B\lambda_B} = \sum_r^R A_{\mu_A r}\,B_{\nu_A r}\,C_{\sigma_B r} \,D_{\lambda_B r},
\end{align}
such that $||\mathbf{N}-\bar{\mathbf{N}}||\leq\epsilon$. This CPD approximation can be rearranged without introducing any extra approximations as:
\begin{align}
    \bar{N}_{\mu_A \nu_A\sigma_B\lambda_B} = \sum_r^R \big( A_{\mu_A r}\,C_{\sigma_B r} \big) \big( B_{\nu_A r}\,D_{\lambda_B r} \big) = 
    \sum_r^R \bar{U}_{\mu_A\sigma_B}^r\,\bar{V}_{\nu_A\lambda_B}^r,
\end{align}
which shows that this CPD is simultaneously an approximation of the matrix $\mathbf{M}$ with the same error bound ($\leq\epsilon$). However, from the definition of the matrix rank, $K$ is the lowest possible expansion length that satisfies this condition. This leads to the conclusion that $R\geq K$ which is the statement we wanted to prove, $\mbox{rank}_\epsilon(\mathbf{N})\geq \mbox{rank}_\epsilon(\mathbf{M})$.

Next, we consider the effective rank of the matrix $\mathbf{M}$. First, note that this matrix is simply the Kronecker product of the matrices $\mathbf{S}_A$ and $\mathbf{S}_B$ with elements $S_{\mu_A \nu_A}$ and $S_{\sigma_B\lambda_B}$, respectively, i.e. $\mathbf{M}=\mathbf{S}_A\otimes \mathbf{S}_B$. It is known that the eigenvalues of a matrix that is a Kronecker product are simply products of eigenvalues of the constituting matrices. Denote the eigenvalues of $\mathbf{S}_A$ and $\mathbf{S}_B$ by $e_r^A$ and $e_r^B$, respectively. Without loss of generality, we assume that they are arranged in a non-increasing order, i.e. $e_1^A\geq e_2^A\geq \ldots$, and similarly for the matrix $B$. The eigenvalues of $\mathbf{M}$ are then simply all possible products of $e_r^A\,e_s^B$. By Eckart-Young-Mirsky theorem~\cite{Eckart_Young_1936}, the best approximation to a matrix of a given rank is obtained by truncating the eigenvalue decomposition, i.e. dropping the lowest eigenvalues until the desired rank is achieved. Our goal is to establish lower and upper bounds for the number of products $e_r^A e_s^B$ that enter this expansion for a given $0<\epsilon<1$, i.e. for $\mbox{rank}_\epsilon(\mathbf{M})$.

Let us begin with the lower bound. Consider the effective ranks $\mathbf{S}_A$ and $\mathbf{S}_B$ with the threshold $\sqrt{\epsilon}$, i.e. $\mbox{rank}_{\sqrt{\epsilon}}(\mathbf{S}_A) = R_A$ and $\mbox{rank}_{\sqrt{\epsilon}}(\mathbf{S}_B) = R_B$. Note that the square root of $\epsilon$ is used in these definitions as this is critical for the subsequent arguments. From the definition of the rank, we have:
\begin{align}
\label{eq:frob1}
    \sum_{r=R_A+1}^{N_A} (e_r^A)^2 \leq \epsilon, \;\;\;\mbox{and}\;\;\;
    \sum_{r=R_A}^{N_A} (e_r^A)^2 > \epsilon,
\end{align}
and similarly for the matrix $\mathbf{S}_B$. Let us now consider the set of all possible products of $e_r^A$ and $e_s^B$. We divide it into four subsets:
\begin{align*}
   &\Omega_{\mathrm{hh}} = \{ e_r^A\,e_s^B: r\in[1,R_A], s\in[1,R_B], \mbox{without\;} r=R_A, s=R_B \}, \\
   &\Omega_{\mathrm{ht}} = \{ e_r^A\,e_s^B: r\in[1,R_A-1], s\in[R_B+1,N_B] \}, \\
   &\Omega_{\mathrm{th}} = \{ e_r^A\,e_s^B: r\in[R_A+1,N_A], s\in[1,R_B-1] \}, \\
   &\Omega_{\mathrm{tt}} = \{ e_r^A\,e_s^B: r\in[R_A,N_A], s\in[R_B,N_B] \}, \\
   &\Omega_{\mathrm{all}} = \Omega_{\mathrm{hh}} \cup \Omega_{\mathrm{ht}} \cup \Omega_{\mathrm{th}} \cup \Omega_{\mathrm{tt}}
\end{align*}
where the subscripts ``h'' and ``t'' stand for head and tail. This partition is illustrated graphically in Fig.~\ref{fig:m-matrix}. The four sets defined above together cover all possible products of $e_r^A$ and $e_s^B$, i.e. $\Omega_{\mathrm{all}}$.

\begin{figure}
    \centering
    \includegraphics[width=0.60\linewidth]{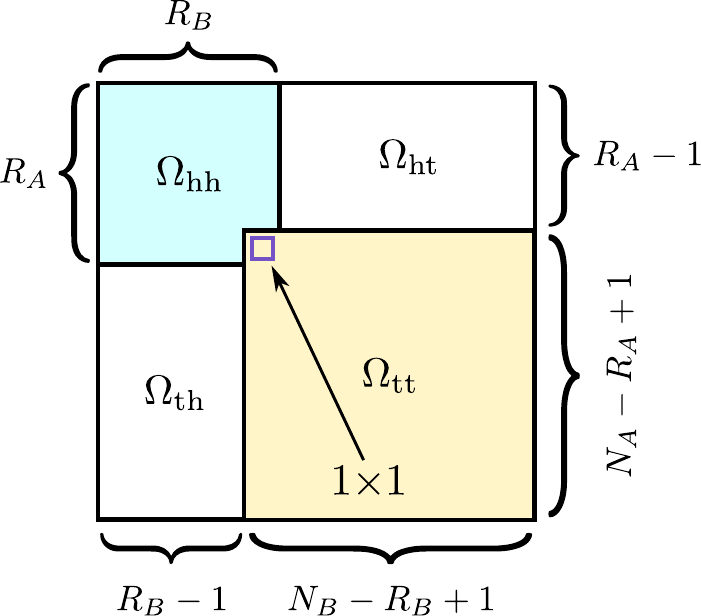}
    \caption{Illustration of the partitioning of the eigenvalues of the $\mathbf{M}$ matrix.}
    \label{fig:m-matrix}
\end{figure}

Consider a set $\Omega_{\mathrm{max}}$ that contains $R_AR_B - 1$ the largest elements of the set $\Omega_{\mathrm{all}}$. We argue that the set $\Omega_{\mathrm{max}}$ is a subset of the union of sets $\Omega_{\mathrm{hh}} \cup \Omega_{\mathrm{ht}} \cup \Omega_{\mathrm{th}}$, i.e. the set $\Omega_{\mathrm{min}} = \Omega_{\mathrm{all}} \setminus \Omega_{\mathrm{max}}$ always contains the entire set $\Omega_{\mathrm{tt}}$. Indeed, note first that the cardinality of $\Omega_{\mathrm{min}}$ is equal to $N_AN_B - R_AR_B + 1$ and is always greater than the cardinality of $\Omega_{\mathrm{tt}}$, since we consider $R_A, R_B > 1$. Second, consider the maximum element $\mathrm{max}\{ \Omega_{\mathrm{tt}} \} = e_{R_A}^A e_{R_B}^B$. Clearly, if $e_i^A e_j^B \geq e_{R_A}^A e_{R_B}^B$, then $i \leq R_A$ or $j \leq R_B$ and thus $e_i^A e_j^B \in \Omega_{\mathrm{all}} \setminus \Omega_{\mathrm{tt}}$, because the products $e_i^A e_j^B$ do not increase along each row and column of the matrix illustrated in Fig.~\ref{fig:m-matrix}. Therefore, if we wanted to obtain the best rank $R_A R_B - 1$ approximation to the matrix $\mathbf{M}$, we would pick only the products of eigenvalues from a subset of $\Omega_{\mathrm{hh}} \cup \Omega_{\mathrm{ht}} \cup \Omega_{\mathrm{th}}$. We will show that the error of an approximation obtained in this way is larger than $\epsilon$. To this end consider rank $R_A R_B - 1$ approximation $\bar{\mathbf{M}}$ to $\mathbf{M}$ obtained by retaining only the eigenvalues from $\Omega_{\mathrm{max}}$. The square of the norm of the error, i.e. $||\mathbf{M} - \bar{\mathbf{M}}||^2$ is given by:
\begin{align}
        ||\mathbf{M} - \bar{\mathbf{M}}||^2 = 
    \sum_{rs\in \Omega_{\mathrm{all}} \setminus \Omega_{\mathrm{max}}} \big( e_r^Ae_s^B\big)^2 = 
    \sum_{rs\in \Omega_{\mathrm{min}} } \big( e_r^Ae_s^B\big)^2,
\end{align}
which can be rewritten exposing the set $\Omega_{\mathrm{tt}}$ (which is a subset of $\Omega_{\mathrm{min}}$ according to the above discussion):
\begin{align}
\label{eq:frob2}
||\mathbf{M} - \bar{\mathbf{M}}||^2 = 
    \sum_{rs\in \Omega_{\mathrm{min}} \setminus \Omega_{\mathrm{tt}}} \big( e_r^Ae_s^B\big)^2 +
    \sum_{rs\in \Omega_{\mathrm{tt}}} \big( e_r^Ae_s^B\big)^2.
\end{align}
The last term can be written more explicitly as:
\begin{align}
    \sum_{rs\in \Omega_{\mathrm{tt}}} \big( e_r^Ae_s^B\big)^2 = 
    \sum_{r=R_A}^{N_A} (e_r^A)^2 \sum_{r=R_B}^{N_B} (e_r^B)^2 > \epsilon^2,
\end{align}
by recalling the definition of $\Omega_{\mathrm{tt}}$ and using Eq.~(\ref{eq:frob1}). As the remaining terms in Eq.~(\ref{eq:frob2}) are non-negative, this immediately gives us $||\mathbf{M} - \bar{\mathbf{M}}||^2>\epsilon^2$ or $||\mathbf{M} - \bar{\mathbf{M}}||>\epsilon$. This shows that the error of the best rank $R_A R_B - 1$ approximation is larger than $\epsilon$ and thus $\mbox{rank}_\epsilon(\mathbf{M})>R_A R_B - 1$. As a consequence, $\mbox{rank}_\epsilon(\mathbf{M})\geq R_A R_B$, and by recalling the definitions of $R_A$ and $R_B$ we find the lower bound:
\begin{align}
\label{eq:mbound1}
    \mbox{rank}_\epsilon(\mathbf{M}) \geq \mbox{rank}_{\sqrt{\epsilon}}(\mathbf{S}_A)\cdot\mbox{rank}_{\sqrt{\epsilon}}(\mathbf{S}_B).
\end{align}
The upper bound for $\mbox{rank}_\epsilon(\mathbf{M})$ is obtained trivially as:
\begin{align}
    \mbox{rank}_\epsilon(\mathbf{M}) \leq \mbox{rank}_0(\mathbf{M}) \leq 
    N_{\mathrm{AO}}^2,
\end{align}
because the rank of a matrix $\mathbf{M}$ cannot exceed its dimension which is, by construction, necessarily no greater than $N_{\mathrm{AO}}^2$. In total, we have the following bounds:
\begin{align}
\label{eq:mbound2}
    \mbox{rank}_{\sqrt{\epsilon}}(\mathbf{S}_A)\cdot\mbox{rank}_{\sqrt{\epsilon}}(\mathbf{S}_B)
    \leq \mbox{rank}_\epsilon(\mathbf{M}) \leq N_{\mathrm{AO}}^2.
\end{align}
In the final step, we employ the following theorem:
\begin{theorem}
\label{th:overlap}
Let $\mathbf{S}$ be the overlap matrix of dimension $N\times N$ between an arbitrary set of AO (normalized to unity) in the system described in Sec.~\ref{sec:system}. Then for any $0<\epsilon<1$, $\mbox{rank}_\epsilon(\mathbf{S})>c_\epsilon N$, where the parameter $c_\epsilon>0$ depends only on $\epsilon$ and the composition of the AO basis set per atom, but is independent of $N$.
\end{theorem}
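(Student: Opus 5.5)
The plan is to combine two facts. First, $\mathbf{S}$ has a large trace, equal to $N$. Second, for spatially localized AOs in a system of bounded density, the largest eigenvalue of $\mathbf{S}$ is bounded by a constant independent of $N$. Together these force many eigenvalues to be of order one, so truncating them costs more than $\epsilon$ in the Frobenius norm.

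Concretely, $\mathbf{S}$ is a Gram matrix, hence symmetric positive semidefinite with eigenvalues $e_1\geq e_2\geq\ldots\geq e_N\geq 0$. Since the AOs are normalized, $\mathrm{tr}\,\mathbf{S}=\sum_i e_i=N$. For a symmetric positive semidefinite matrix the singular values coincide with the eigenvalues. By the Eckart--Young--Mirsky theorem, already used above for $\mathbf{M}$, the best rank-$k$ approximation therefore has squared error $\sum_{i>k}e_i^2$.

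The first step is to show $e_1\leq C$, with $C$ depending only on the per-atom basis composition (exponents and angular momenta) and on the geometric parameters of the system (minimal interatomic separation, bounded local density), but not on $N$. I would use Gershgorin's theorem, $e_1\leq\max_\mu\sum_\nu|S_{\mu\nu}|$, and bound the row sums.
\begin{itemize}
\item For normalized primitive GTOs with exponents $a,b\geq a_{\mathrm{min}}$ and angular momentum up to $l_{\max}$, the overlap is bounded by a polynomial in the distance $r$ times $e^{-\frac{ab}{a+b}r^2}\leq e^{-a_{\mathrm{min}}r^2/2}$.
\item Contracted functions are finite fixed combinations of primitives, so they obey the same type of bound with a basis-dependent prefactor.
\item Trivially, $|S_{\mu\nu}|\leq 1$.
\item Because the atomic density is bounded (this is where $\rho=\mathcal{O}(1)$ and a minimal interatomic distance enter), the number of AOs centred within a shell $[r,r+1)$ of any given centre is at most $K(1+r)^2$.
\end{itemize}
Summing over shells, the row sum is at most $\sum_{n\geq0}K(1+n)^2\,P(n)\,e^{-a_{\mathrm{min}}n^2/2}=:C<\infty$. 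This localization estimate is the main technical obstacle: one must check that the polynomial prefactors from higher angular momenta and contractions do not spoil the uniformity. I expect only routine bookkeeping here, since Gaussian decay dominates any polynomial.

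The second step is the counting argument. Suppose $k\leq N/(2C)$. Then $\sum_{i\leq k}e_i\leq kC\leq N/2$, so $\sum_{i>k}e_i\geq N/2$. By Cauchy--Schwarz,
\begin{align*}
\sum_{i>k}e_i^2\geq\frac{\big(\sum_{i>k}e_i\big)^2}{N-k}\geq\frac{N^2/4}{N}=\frac{N}{4}.
\end{align*}
For $N\geq4$ this exceeds $1>\epsilon^2$, so no approximation of rank $\leq N/(2C)$ achieves error $\leq\epsilon$, and $\mathrm{rank}_\epsilon(\mathbf{S})>N/(2C)$.

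For the finitely many small cases $N<4$, note that $\|\mathbf{S}\|\geq\sqrt{\mathrm{tr}(\mathbf{S}^2)}\geq\sqrt{\sum_\mu S_{\mu\mu}^2}=\sqrt{N}\geq1>\epsilon$. Hence the zero tensor is not admissible and $\mathrm{rank}_\epsilon(\mathbf{S})\geq1>N/4$. Taking $c_\epsilon=\min\{1/(2C),1/4\}$ then gives the claim. In fact $c_\epsilon$ turns out to be independent of $\epsilon$ in $(0,1)$, which is consistent with and stronger than the statement.
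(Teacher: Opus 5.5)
Your proof is correct. Its first step matches the paper's lemma. The paper also bounds $e_{\max}$ via Gershgorin's theorem, a Gaussian decay bound $|S_{\mu\nu}|\le Ce^{-\alpha R_{\mu\nu}^2}$, and a count of AOs in unit-width shells based on the bounded density. Your explicit appeal to a minimal interatomic separation states that density assumption more carefully. The paper phrases it as ``fewer than $dV'$ AOs in any volume $V'$'', which fails literally for tiny volumes but is only applied to shells of unit width.

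Where you differ is the counting step.
\begin{itemize}
\item The paper works with second moments. It splits $\|\mathbf{S}\|^2=\sum_{r\le M}e_r^2+\sum_{r>M}e_r^2<K^2M+\epsilon^2$ and compares this with $\|\mathbf{S}\|^2\ge\sum_\mu S_{\mu\mu}^2=N$. This gives $M>(N-\epsilon^2)/K^2\ge N(1-\epsilon^2)/K^2$ in one line, valid for every $N\ge1$ with no case split and no Cauchy--Schwarz step.
\item You work with the first moment $\mathrm{tr}\,\mathbf{S}=N$. You then need Cauchy--Schwarz to convert the tail's trace mass into Frobenius mass, plus a separate argument for $N<4$.
\end{itemize}

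In exchange, your constant $\min\{1/(2C),1/4\}$ is uniform in $\epsilon\in(0,1)$ and scales like $1/C$ rather than $1/K^2$. It is therefore sharper when the Gershgorin bound is large. The paper's stated constant $(1-\epsilon^2)/K^2$ degenerates as $\epsilon\to1$, an artifact of the loose step $\epsilon^2/N\le\epsilon^2$. For the use made of the theorem in Sec.~\ref{sec:tensorn}, where only linearity in $N$ matters, either constant suffices.
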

The proof of this theorem is somewhat technical and has been included in Appendix~\ref{sec:appa} rather than in the main text. 

To take advantage of Theorem~\ref{th:overlap}, we assume that $\epsilon<1$ and hence $\sqrt{\epsilon}<1$. Then clearly $\mbox{rank}_{\sqrt{\epsilon}}(\mathbf{S}_A)>c_{\sqrt{\epsilon}} \,N_{\mathrm{sub}}^A$ (and analogously for the second matrix). By construction, $N_{\mathrm{sub}}^A\propto N_{\mathrm{AO}}$, and hence there exists a constant $c_\epsilon^{(2)}$ such that:
\begin{align}
    \mbox{rank}_\epsilon(\mathbf{M}) > c_\epsilon^{(2)} N_{\mathrm{AO}}^2.
\end{align}
This shows that the CPD rank of the tensor $\mathbf{N}$ also increases at least quadratically with the system size, i.e. $\mbox{rank}_\epsilon(\mathbf{N}) > c_\epsilon^{(2)} N_{\mathrm{AO}}^2$, for any $0<\epsilon<1$, which is the main statement we wanted to prove in this section.

\subsection{\label{sec:dhada} Tensor $D$ as a Hadamard product}

Our next major goal is to establish an upper bound for the effective rank of the tensor $\mathbf{D}$. However, this problem is difficult to attack directly. To simplify it, we split this task into two components. First, we will consider the rank of the tensor $\mathbf{G} = \mathbf{D}\odot \mathbf{D}$ and subsequently focus on the tensor $\mathbf{D}^{\circ-1}$. To show how this translates into the effective rank of the target tensor $\mathbf{D}$, we start with the obvious identity $\mathbf{D} = \mathbf{D}\odot \mathbf{D} \odot \mathbf{D}^{\circ-1} = \mathbf{G} \odot \mathbf{D}^{\circ-1}$. Let us assume that we know the exact (mathematical) rank of the tensor $\mathbf{G}$, i.e. $\mbox{rank}(\mathbf{G})$. The expression for $\mbox{rank}(\mathbf{G})$ will be derived in the next section. This means that there exists a tensor $\bar{\mathbf{G}}$ in the CPD format~(\ref{eq:cpd-brutal}) such that $\mbox{rank}(\bar{\mathbf{G}}) = \mbox{rank}(\mathbf{G})$ and $\bar{\mathbf{G}}=\mathbf{G}$. Moreover, assume that we know the expression for the effective rank of the tensor $\mathbf{D}^{\circ-1}$ with the threshold $0<\epsilon<1$, i.e. $\mbox{rank}_\epsilon(\mathbf{D}^{\circ-1})$. As a result, there exists a tensor $\bar{\mathbf{D}}^{\circ-1}$ in the CPD form~(\ref{eq:cpd-brutal}) with rank $\mbox{rank}_\epsilon(\mathbf{D}^{\circ-1})$ such that $||\mathbf{D}^{\circ-1} - \bar{\mathbf{D}}^{\circ -1} || < \epsilon$. Consider now the Hadamard product $\mathbf{G}\odot \bar{\mathbf{D}}^{\circ -1}$ which can also be written in the CPD format. The Frobenius error of the approximation of $D$ reads:
\begin{align}
\begin{split}
    &|| \mathbf{D} - \mathbf{G}\odot \bar{\mathbf{D}}^{\circ -1} || =
    || \mathbf{G} \odot \mathbf{D}^{\circ -1} - \mathbf{G}\odot \bar{\mathbf{D}}^{\circ -1} || = \\
    &|| \mathbf{G} \odot ( \mathbf{D}^{\circ -1} - \bar{\mathbf{D}}^{\circ -1} ) ||,
\end{split}
\end{align}
and by employing the inequality from Eq.~(\ref{eq:hada2}) we find:
\begin{align*}
    || \mathbf{D} - \mathbf{G}\odot \bar{\mathbf{D}}^{\circ -1} || \leq
    \mbox{max}(|\mathbf{G}|)\cdot || \mathbf{D}^{\circ -1} - \bar{\mathbf{D}}^{\circ -1} || \leq
    \mbox{max}(|\mathbf{G}|)\cdot \epsilon.
\end{align*}
The maximum element of the tensor $\mathbf{G}$ is smaller than $(2R)^2$, because the distance between any pair of points cannot exceed the total diameter ($2R$) of the entire system. This leads to the error bound:
\begin{align*}
    || \mathbf{D} - \mathbf{G}\odot \bar{\mathbf{D}}^{\circ -1} || \leq 
    (2R)^2\cdot \epsilon.
\end{align*}
This shows that the rank of the tensor $\mathbf{D}$ with threshold $(2R)^2\cdot \epsilon$ is bounded from above by:
\begin{align}
    \mbox{rank}_{(2R)^2\cdot \epsilon}(\mathbf{D}) \leq \mbox{rank}(\mathbf{G}) 
    \cdot \mbox{rank}_\epsilon(\mathbf{D}^{\circ -1}).
\end{align}
An equivalent form of this inequality that will directly be used later reads:
\begin{align}
\label{eq:rankd1}
    \mbox{rank}_{\epsilon}(\mathbf{D}) \leq \mbox{rank}(\mathbf{G}) 
    \cdot \mbox{rank}_{\epsilon/(2R)^2}(\mathbf{D}^{\circ -1}),
\end{align}
which is obtained simply by redefining $\epsilon$. Therefore, to find the upper bound for the tensor $\mathbf{D}$ we have to find the mathematical rank of the tensor $\mathbf{G}$, and an upper bound for the effective rank of the tensor $\mathbf{D}^{\circ -1}$ with a modified threshold $\epsilon/(2R)^2$. These are the goals of the next two sections.

\subsection{\label{sec:tensord2} Rank of the tensor $\mathbf{G} = \mathbf{D}\odot \mathbf{D}$}

Written elementwise, the tensor $\mathbf{G}$ is given by:
\begin{align}
    G_{\mu_A\nu_A\sigma_B\lambda_B} = |\mathbf{R}_{\mu_A\nu_A}-\mathbf{R}_{\sigma_B\lambda_B}|^2,
\end{align}
and can be expanded to the form:
\begin{align}
\begin{split}
    G_{\mu_A\nu_A\sigma_B\lambda_B} &= 
    |\mathbf{R}_{\mu_A\nu_A}|^2
    -2 X_{\mu_A\nu_A}\,X_{\sigma_B\lambda_B}
    -2 Y_{\mu_A\nu_A}\,Y_{\sigma_B\lambda_B} \\
    &-2 Z_{\mu_A\nu_A}\,Z_{\sigma_B\lambda_B}
    +|\mathbf{R}_{\sigma_B\lambda_B}|^2,
\end{split}
\end{align}
where we have used $\mathbf{R}_{\mu_A\nu_A}=[X_{\mu_A\nu_A},Y_{\mu_A\nu_A},Z_{\mu_A\nu_A}]$ and similarly for $\mathbf{R}_{\sigma_B\lambda_B}$. Let us denote by $\mathbf{M}^A$ a matrix with elements $M_{\mu_A\nu_A}^A = |\mathbf{R}_{\mu_A\nu_A}|^2$ and by $\mathbf{X}^A$ a matrix with elements $X_{\mu_A\nu_A}^A = X_{\mu_A\nu_A}$. Similar definitions hold for the counterparts of these quantities from the sphere $B$. In this notation, the tensor $\mathbf{G}$ can be rewritten in the following form:
\begin{align}
\begin{split}
    \mathbf{G} = \mathbf{M}^A\otimes\mathbf{J}^B - 2\cdot\mathbf{X}^A\otimes\mathbf{X}^B -2\cdot\mathbf{Y}^A\otimes\mathbf{Y}^B \\
-2\cdot\mathbf{Z}^A\otimes\mathbf{Z}^B + \mathbf{J}^A\otimes\mathbf{M}^B,
\end{split}
\end{align}
where the matrices $\mathbf{J}^A$ and $\mathbf{J}^B$ have the same dimensions as $\mathbf{M}^A$ and $\mathbf{M}^B$, respectively, but all of their elements are equal to unity. Applying the standard relationships for the rank of a sum and the Kronecker product of matrices~\cite{meyer2000} we arrive at:\begin{align}
    &\mbox{rank}(\mathbf{A}\otimes \mathbf{B}) = \mbox{rank}(\mathbf{A})\cdot\mbox{rank}(\mathbf{B}),\\
    &\mbox{rank}(\mathbf{A} + \mathbf{B}) \leq \mbox{rank}(\mathbf{A}) + \mbox{rank}(\mathbf{B}).
\end{align}
Taking into account that the rank of the three Cartesian components of $\mathbf{R}_{\mu_A\nu_A}$ is not necessarily identical, we obtain the upper bound for the rank of tensor $\mathbf{G}$:
\begin{align}
\label{eq:grank1}
\begin{split}
    &\mbox{rank}(\mathbf{G}) \leq \mbox{rank}(\mathbf{M}^A)
    + \mbox{rank}(\mathbf{X}^A)\,\mbox{rank}(\mathbf{X}^B) \\
    &+ \mbox{rank}(\mathbf{Y}^A)\,\mbox{rank}(\mathbf{Y}^B) 
    + \mbox{rank}(\mathbf{Z}^A)\,\mbox{rank}(\mathbf{Z}^B) \\
    &+ \mbox{rank}(\mathbf{M}^B).
\end{split}
\end{align}
On the right-hand side of the above inequality, the symbol $\mbox{rank}(*)$ stands for the standard matrix rank, i.e. the number of linearly independent rows/columns. This inequality is a direct consequence of the matrix rank invariance under multiplication by a non-zero scalar.

Our next goal is to establish an upper bound for the rank of the matrices that appear on the right-hand side of Eq.~(\ref{eq:grank1}). For simplicity, we will focus on the elements of the matrix $\mathbf{X}^A$ here, but identical formulas remain valid also for $\mathbf{Y}^A$ and $\mathbf{Z}^A$ if the Cartesian coordinates are exchanged throughout.

Consider first the rank of the matrix $\mathbf{X}^A$ with elements explicitly given by:
\begin{align}
    X_{\mu_A\nu_A} = \frac{a_{\mu_A} X_{\mu_A} + a_{\nu_A} X_{\nu_A}}{a_{\mu_A} + a_{\nu_A}}.
\end{align}
Recall that $a_{\mu_A}$ were selected as the lowest exponents of an uncontracted $1s$ GTO function located at each atom. Assuming that each element has its own unique AO basis set, the value of $a_{\mu_A}$ depends only on the type of atom (nuclear charge), but not on the location of this particular atom or the number of such atoms in the system. For example, consider a water cluster where there are only two elements in the system, independently of its size, hydrogen and oxygen. The value of $a_{\mu_A}$ is then identical for each hydrogen atom, and a different value $a_{\mu_A}$ is identical for each oxygen atom. More generally, assume that we have $M$ distinct elements in the system. The value of $M$ is completely independent of the system size, as this is necessary to keep the density of the system $\rho$ constant. As a result, we have only $M$ unique values of $a_{\mu_A}$ in the whole molecule, denoted shortly by $a_m$, $m=1,\ldots,M$. To identify them, let us introduce an indicator matrix $I_{\mu_A,m}$ that is equal to $1$ if the function $\mu_A$ belongs to an element with index $m=1,\ldots,M$, and zero otherwise. From this construction, we have for all $\mu_A$ the equality:
\begin{align}
    a_{\mu_A} = \sum_{m=1}^M a_m\,I_{\mu_A,m},
\end{align}
and similarly for any continuous function $f(a_{\mu_A})$ dependent on the exponent. This enables us to rewrite the quantities $X_{\mu_A\nu_A}$ as:
\begin{align}
    X_{\mu_A\nu_A} = \sum_{m,n=1}^M
    \frac{a_m X_{\mu_A} + a_n X_{\nu_A}}{a_m + a_n}\,I_{\mu_A,m}\,I_{\nu_A,n}.
\end{align}
After straightforward rearrangements, we bring this to the form:
\begin{align}
\begin{split}
    X_{\mu_A\nu_A} &= \sum_{m,n=1}^M (X_{\mu_A}\,I_{\mu_A,m})\,
    \frac{a_m}{a_m + a_n}\,I_{\nu_A,n} \\
    &+ \sum_{m,n=1}^M I_{\mu_A,m}\,
    \frac{a_n}{a_m + a_n}\,(X_{\nu_A}\,I_{\nu_A,n}).
\end{split}
\end{align}
To represent this formula in a matrix form that is more convenient later, we introduce matrices $\mathbf{w}$ and $\mathbf{C}$ with elements $w_{mn} = \frac{a_m}{a_m + a_n}$ and $C_{\mu_A,m} = X_{\mu_A}\,I_{\mu_A,m}$. This leads to:
\begin{align}
    \mathbf{X}^A = \mathbf{C}\,\mathbf{w}\,\mathbf{I}^\mathrm{T} + 
    \mathbf{I}\,\mathbf{w}^\mathrm{T}\,\mathbf{C}^\mathrm{T}.
\end{align}
As the dimension of the matrix $\mathbf{w}$ is $M\times M$, we arrive at the conclusion that $\mbox{rank}(\mathbf{X}^A)\leq2M$. The same relationship holds for the remaining two Cartesian components of $\mathbf{R}_{\mu_A\nu_A}$, i.e. $\mbox{rank}(\mathbf{Y}^A)\leq2M$ and $\mbox{rank}(\mathbf{Z}^A)\leq2M$.
Nearly identical manipulations can be performed for the second matrix $\mathbf{M}^A$, showing that $\mbox{rank}(\mathbf{M}^A)\leq5M$. Identical bounds are also valid for the respective matrices $\mathbf{X}^B$ and $\mathbf{M}^B$ related to the sphere $B$. Returning to Eq.~(\ref{eq:grank1}) and using the bounds found above, we obtain:
\begin{align}
\label{eq:grank2}
    \mbox{rank}(\mathbf{G}) \leq 12M^2 + 10M,
\end{align}
demonstrating that the rank of this tensor is bounded from above by a quantity that is independent of the system size. Returning to the example of water clusters with $M=2$, we have $\mbox{rank}(\mathbf{G}) \leq 68$, independently of the number of water molecules present in the spheres $A$ and $B$.

\subsection{\label{sec:tensordminus} Effective rank of the tensor $\mathbf{D}^{\circ -1}$}

In this section, we consider the effective rank of the tensor $\mathbf{D}^{\circ -1}$ and show that this quantity grows very slowly with the system size. The fundamental formula that will be used here is a special case of the Laplace expansion of the interaction potential:
\begin{align}
\label{eq:mult1}
    |\mathbf{r} + \mathbf{R}|^{-1} = \sum_{L=0}^\infty (-1)^L \frac{|\mathbf{r}|^{L}}{|\mathbf{R}|^{L+1}}\,P_L(\cos\theta),
\end{align}
where $\theta$ is the angle between the vectors $\mathbf{r}$ and $\mathbf{R}$, $P_L$ are the Legendre polynomials of degree $L$. This expansion is valid (convergent) for $|\mathbf{r}|\leq |\mathbf{R}|$.

Our plan here is to proceed in three steps. First, we will show how the formula~(\ref{eq:mult1}) can be applied to the elements of the tensor $\mathbf{D}^{\circ -1}$. Next, we will truncate this expansion at some $L=L_{\mathrm{max}}>0$ and study how this approximation translates into the error in the Frobenius norm of the whole tensor as a function of the system size. Finally, we will show that without any further approximations, the truncated expansion~(\ref{eq:mult1}) can be rewritten in the CPD form defined by Eq.~(\ref{eq:cpd-brutal}). This will naturally lead to an upper bound for the effective rank of the tensor $\mathbf{D}^{\circ -1}$.

Note that the tensor $\mathbf{D}^{\circ -1}$ gathers Coulomb interactions between classical unit charges located at points $\mathbf{R}_{\mu_A\nu_A}$ and $\mathbf{R}_{\sigma_B\lambda_B}$. As the parent AO functions with indices $\mu_A,\nu_A$ and $\sigma_B,\lambda_B$ are located within spheres $A$ and $B$, the same is true for the points $\mathbf{R}_{\mu_A\nu_A}$ and $\mathbf{R}_{\sigma_B\lambda_B}$. Recall that the centers of the spheres $A$ and $B$ are located at $\mathbf{O}_A=(0,0,2R/3)$ and $\mathbf{O}_B=(0,0,-2R/3)$. The elements of the tensor $\mathbf{D}^{\circ -1}$ can then be rewritten as:
\begin{align}
\begin{split}
    &|\mathbf{R}_{\mu_A\nu_A}-\mathbf{R}_{\sigma_B\lambda_B}|^{-1}=
    |\mathbf{R}_{\mu_A\nu_A}-\mathbf{O}_A-(\mathbf{R}_{\sigma_B\lambda_B}-\mathbf{O}_B)+\mathbf{R}_0|^{-1} = \\
    &|\mathbf{R}_{\mu_A\nu_A}^{(0)}-\mathbf{R}_{\sigma_B\lambda_B}^{(0)}+\mathbf{R}_0|^{-1},
\end{split}
\end{align}
where $\mathbf{R}_0=(0,0,4R/3)$ is the vector that points from the center of the sphere $A$ to the center of the sphere $B$, while $\mathbf{R}_{\mu_A\nu_A}^{(0)}=\mathbf{R}_{\mu_A\nu_A}-\mathbf{O}_A$ and similarly for the sphere $B$. As the points $\mathbf{R}_{\mu_A\nu_A}^{(0)}$ and $\mathbf{R}_{\sigma_B\lambda_B}^{(0)}$ are located entirely within or on the surfaces of the spheres $A$ and $B$, we have $|\mathbf{R}_{\mu_A\nu_A}^{(0)}|\leq R/3$ and $|\mathbf{R}_{\sigma_B\lambda_B}^{(0)}|\leq R/3$. From the triangle inequality, we then find:
\begin{align}
    |\mathbf{R}_{\mu_A\nu_A}^{(0)}-\mathbf{R}_{\sigma_B\lambda_B}^{(0)}| \leq
    |\mathbf{R}_{\mu_A\nu_A}^{(0)}| + |\mathbf{R}_{\sigma_B\lambda_B}^{(0)}| \leq
    2R/3,
\end{align}
while for the distance between expansion points we have $R_0=|\mathbf{R}_0|=4R/3$. This shows that $|\mathbf{R}_{\mu_A\nu_A}^{(0)}-\mathbf{R}_{\sigma_B\lambda_B}^{(0)}| \leq |\mathbf{R}_0|$ and we can set $\mathbf{r}=\mathbf{R}_{\mu_A\nu_A}^{(0)}-\mathbf{R}_{\sigma_B\lambda_B}^{(0)}$ and $\mathbf{R}=\mathbf{R}_0$ in Eq.~(\ref{eq:mult1}), giving a bipolar form of the Laplace expansion known from the fast multipole method~\cite{rokhlin85,greengard88,greengard94} (FFM) or related techniques:
\begin{align}
\label{eq:mult2}
    |\mathbf{R}_{\mu_A\nu_A}-\mathbf{R}_{\sigma_B\lambda_B}|^{-1} = 
    \sum_{L=0}^\infty (-1)^L \frac{|\mathbf{R}_{\mu_A\nu_A}^{(0)}-\mathbf{R}_{\sigma_B\lambda_B}^{(0)}|^{L}}{R_0^{L+1}}\,P_L(\cos\theta),
\end{align}
where $\theta$ is now the angle between the vectors $\mathbf{R}_{\mu_A\nu_A}^{(0)}-\mathbf{R}_{\sigma_B\lambda_B}^{(0)}$ and $\mathbf{R}_0$.

Let us now assume that the above expansion is truncated at some $L=L_{\mathrm{max}}>0$. This introduces an error which is given by the expression:
\begin{align}
\label{eq:mult3}
\begin{split}
    &\delta |\mathbf{R}_{\mu_A\nu_A}-\mathbf{R}_{\sigma_B\lambda_B}|^{-1} = \\
    &\sum_{L=L_{\mathrm{max}}+1}^\infty (-1)^L \frac{|\mathbf{R}_{\mu_A\nu_A}^{(0)}-\mathbf{R}_{\sigma_B\lambda_B}^{(0)}|^{L}}{R_0^{L+1}}\,P_L(\cos\theta).
\end{split}
\end{align}
We want to derive an upper bound for the value of this error as a function of $L_{\mathrm{max}}$. To this end, we first use the fact that the sum is necessarily smaller than the sum of absolute values of each term, giving:
\begin{align}
\label{eq:mult4}
\begin{split}
    \delta |\mathbf{R}_{\mu_A\nu_A}-\mathbf{R}_{\sigma_B\lambda_B}|^{-1} \leq 
    \sum_{L=L_{\mathrm{max}}+1}^\infty \frac{|\mathbf{R}_{\mu_A\nu_A}^{(0)}-\mathbf{R}_{\sigma_B\lambda_B}^{(0)}|^{L}}{R_0^{L+1}}\,|P_L(\cos\theta)|.
\end{split}
\end{align}
Next, we use the upper bound for the Legendre polynomials $|P_L(x)|\leq 1$ which is valid everywhere within the domain $x\in[-1,+1]$, together with the inequality $|\mathbf{R}_{\mu_A\nu_A}^{(0)}-\mathbf{R}_{\sigma_B\lambda_B}^{(0)}|\leq 2R/3$ derived above. Additionally, we insert $R_0=4R/3$ and after some rearrangements we arrive at a simplified upper bound for the sum:
\begin{align}
\label{eq:mult5}
    \delta |\mathbf{R}_{\mu_A\nu_A}-\mathbf{R}_{\sigma_B\lambda_B}|^{-1} \leq 
    \frac{3}{4R}\sum_{L=L_{\mathrm{max}}+1}^\infty \frac{1}{2^L}.
\end{align}
The remaining sum is an elementary geometric series and can be evaluated exactly:
\begin{align}
\label{eq:mult6}
    \delta |\mathbf{R}_{\mu_A\nu_A}-\mathbf{R}_{\sigma_B\lambda_B}|^{-1} \leq 
    \frac{3}{4R}\,2^{-L_{\mathrm{max}}}.
\end{align}
This shows that the error of the truncated expansion is bounded from above by a quantity that is independent of the indices of the tensor. However, we are not interested in the elementwise error, but rather in the Frobenius norm of the error tensor. To derive it, we simply apply the upper bound found above to each entry of the error tensor, sum over all its elements, and take the square root to find:
\begin{align}
\label{eq:mult7}
    ||\delta |\mathbf{R}_{\mu_A\nu_A}-\mathbf{R}_{\sigma_B\lambda_B}|^{-1} || \leq 
    \frac{3}{4R}\,2^{-L_{\mathrm{max}}} N_{\mathrm{sub}}^A\,N_{\mathrm{sub}}^B,
\end{align}
taking into account the dimensions of the tensor along each mode. A more convenient upper bound is established by additionally using the facts that $N_{\mathrm{sub}}^A\leq N_{\mathrm{AO}}$, $N_{\mathrm{sub}}^B\leq N_{\mathrm{AO}}$, and $R=\rho N_{\mathrm{AO}}^{1/3}$, leading to:
\begin{align}
\label{eq:mult8}
    ||\delta |\mathbf{R}_{\mu_A\nu_A}-\mathbf{R}_{\sigma_B\lambda_B}|^{-1} || \leq 
    \frac{3 N_{\mathrm{AO}}^{5/3}}{4\rho}\,2^{-L_{\mathrm{max}}}.
\end{align}
According to the discussion in Sec.~\ref{sec:dhada}, we have to find a value of $L_{\mathrm{max}}$ that is sufficient to reach an error no greater than $\frac{\epsilon}{(2R)^2} = \frac{\epsilon}{4\rho^2 N_{\mathrm{AO}}^{2/3}}$ for $0<\epsilon<1$ in the Frobenius norm, i.e. $||\delta |\mathbf{R}_{\mu_A\nu_A}-\mathbf{R}_{\sigma_B\lambda_B}|^{-1} || \leq \frac{\epsilon}{4\rho^2 N_{\mathrm{AO}}^{2/3}}$. After straightforward rearrangements, we obtain:
\begin{align}
    L_{\mathrm{max}} \geq \log_2\left( \frac{3\rho N_{\mathrm{AO}}^{7/3}}{\epsilon}\right),
\end{align}
and this reveals that to satisfy the error requirement, it is sufficient to set:
\begin{align}
\label{eq:lmax}
    L_{\mathrm{max}} = c_\epsilon \log_2 N_{\mathrm{AO}},
\end{align}
where $c_\epsilon$ is a constant that is independent of the system size, given by the formula:
\begin{align}
    c_\epsilon = \frac{7}{3} + \max\left( 0, \log_2 \frac{3\rho}{\epsilon} \right).
\end{align}
In other words, to keep the error~(\ref{eq:mult3}) of the truncated bipolar expansion below a threshold $\frac{\epsilon}{(2R)^2}$ (in the Frobenius norm), the parameter $L_{\mathrm{max}}$ has to increase at worst logarithmically with the system size.

Once the expansion in Eq.~(\ref{eq:mult2}) is truncated at $L_{\mathrm{max}}$, it can be transformed into the CPD approximation defined through Eq.~(\ref{eq:cpd-brutal}) without introducing any additional approximations. However, this transformation requires one to take several steps. First, by virtue of the spherical harmonic addition theorem, the factor $P_L(\cos\theta)$ is expanded as:
\begin{align}
    P_L(\cos\theta) = \frac{4\pi}{2L+1}\sum_{M=-L}^L Y_{LM}(\mathbf{R}_{\mu_A\nu_A}^{(0)}-\mathbf{R}_{\sigma_B\lambda_B}^{(0)})\,Y_{LM}^*(\mathbf{R}_0),
\end{align}
where $Y_{lm}$ are the spherical harmonics. As the vector $\mathbf{R}_0$ is aligned along the $z$ axis, only the term with $M=0$ brings a non-zero contribution to the sum. This allows us to simplify the original expansion to the form:
\begin{align}
\label{eq:mult8b}
\begin{split}
    |\mathbf{R}_{\mu_A\nu_A}-\mathbf{R}_{\sigma_B\lambda_B}|^{-1} &= 
    \sum_{L=0}^{L_{\mathrm{max}}} (-1)^L\frac{4\pi}{2L+1} \frac{|\mathbf{R}_{\mu_A\nu_A}^{(0)}-\mathbf{R}_{\sigma_B\lambda_B}^{(0)}|^{L}}{R_0^{L+1}} \\
    &\times Y_{L0}(\mathbf{R}_{\mu_A\nu_A}^{(0)}-\mathbf{R}_{\sigma_B\lambda_B}^{(0)})\,
    Y_{L0}(\mathbf{R}_0),
\end{split}
\end{align}
where we have also dropped the complex conjugation operation, because the spherical harmonics with $M=0$ are purely real. To bring this formula into a more compact form, convenient for subsequent manipulations, we write it in terms of the regular and irregular solid harmonics defined as:
\begin{align}
    &\mathcal{R}_{lm}(\mathbf{r}) = \sqrt{\frac{4\pi}{2l+1}} r^l Y_{lm}(\mathbf{r}), \\
    &\mathcal{I}_{lm}(\mathbf{r}) = \sqrt{\frac{4\pi}{2l+1}} \frac{1}{r^{l+1}} Y_{lm}(\mathbf{r}),
\end{align}
which gives:
\begin{align}
\label{eq:mult9}
\begin{split}
    |\mathbf{R}_{\mu_A\nu_A}-\mathbf{R}_{\sigma_B\lambda_B}|^{-1} &= 
    \sum_{L=0}^{L_{\mathrm{max}}} (-1)^L\mathcal{I}_{L0}(\mathbf{R}_0)\,\mathcal{R}_{L0}(\mathbf{R}_{\mu_A\nu_A}^{(0)}-\mathbf{R}_{\sigma_B\lambda_B}^{(0)}).
\end{split}
\end{align}
The pairs of indices $\mu_A\nu_A$ and $\sigma_B\lambda_B$ can be separated using the formula for the translation of the regular solid harmonics. In the general case, it reads:
\begin{align}
\label{eq:rlmtrans}
    \mathcal{R}_{LM}(\mathbf{r}_1 - \mathbf{r}_2) = \sum_{l=0}^L \sum_{m=-l}^l C_{lm}^{LM}\,
    \mathcal{R}_{lm}(\mathbf{r}_1)\,\mathcal{R}_{L-l,M-m}(\mathbf{r}_2),
\end{align}
where the coefficients $C_{lm}^{LM}$ are not written explicitly to avoid notational clutter, but can be found elsewhere~\cite{sack64,steinborn73}. Note that this expansion is finite and hence introduces no approximation. Applying it to the term $\mathcal{R}_{L0}(\mathbf{R}_{\mu_A\nu_A}^{(0)}-\mathbf{R}_{\sigma_B\lambda_B}^{(0)})$ in Eq.~(\ref{eq:mult9}) we obtain:
\begin{align}
\label{eq:mult10}
\begin{split}
    |\mathbf{R}_{\mu_A\nu_A}-\mathbf{R}_{\sigma_B\lambda_B}|^{-1} &= 
    \sum_{L=0}^{L_{\mathrm{max}}} (-1)^L\mathcal{I}_{L0}(\mathbf{R}_0)\,
    \sum_{l=0}^L \sum_{m=-l}^l C_{lm}^{L0} \\
    &\times
    \mathcal{R}_{lm}(\mathbf{R}_{\mu_A\nu_A}^{(0)})\,\mathcal{R}_{L-l,-m}(\mathbf{R}_{\sigma_B\lambda_B}^{(0)}),
\end{split}
\end{align}
However, this expansion cannot yet be matched with the CPD format, cf. Eq.~(\ref{eq:cpd-brutal}), because pairs of indices $\mu_A\nu_A$ and $\sigma_B\lambda_B$ in Eq.~(\ref{eq:mult1}) are still ``pinned'' together in the regular solid harmonics rather than appearing fully independently as in Eq.~(\ref{eq:cpd-brutal}). Fortunately, separation of these indices can also be achieved without introducing any extra error. To this end, we first recall that $\mathbf{R}_{\mu_A\nu_A} = \frac{a_{\mu_A} \mathbf{R}_{\mu_A} + a_{\nu_A} \mathbf{R}_{\nu_A}}{a_{\mu_A} + a_{\nu_A}}$ which allows us to write:
\begin{align}
\begin{split}
    &\mathbf{R}_{\mu_A\nu_A}^{(0)} = \mathbf{R}_{\mu_A\nu_A} - \mathbf{O}_A = 
    \frac{a_{\mu_A} \mathbf{R}_{\mu_A} + a_{\nu_A} \mathbf{R}_{\nu_A}}{a_{\mu_A} + a_{\nu_A}} - \mathbf{O}_A = \\
    &\frac{a_{\mu_A}}{a_{\mu_A} + a_{\nu_A}}(\mathbf{R}_{\mu_A} - \mathbf{O}_A) +
     \frac{a_{\nu_A}}{a_{\mu_A} + a_{\nu_A}}(\mathbf{R}_{\nu_A} - \mathbf{O}_A) = \\
    &\frac{a_{\mu_A} \mathbf{R}_{{\mu_A}}^{(0)} + a_{\nu_A} \mathbf{R}_{{\nu_A}}^{(0)}}{a_{\mu_A} + a_{\nu_A}}.
\end{split}
\end{align}
As the regular solid harmonic is a homogeneous polynomial in the Cartesian components of the vector argument, we have:
\begin{align}
\begin{split}
    &\mathcal{R}_{lm}(\mathbf{R}_{\mu_A\nu_A}^{(0)}) = 
    \mathcal{R}_{lm}\left(\frac{a_{\mu_A} \mathbf{R}_{{\mu_A}}^{(0)} + a_{\nu_A} \mathbf{R}_{{\nu_A}}^{(0)}}{a_{\mu_A} + a_{\nu_A}}\right) = \\
    &\frac{1}{(a_{\mu_A} + a_{\nu_A})^l}\,\mathcal{R}_{lm}\left(a_{\mu_A} \mathbf{R}_{{\mu_A}}^{(0)} + a_{\nu_A} \mathbf{R}_{{\nu_A}}^{(0)}\right).
\end{split}
\end{align}
Using Eq.~(\ref{eq:rlmtrans}), the second term is now straightforward to separate into a sum of products of quantities that depend on the indices $\mu_A$ and $\nu_A$, namely:
\begin{align}
\begin{split}
    &\mathcal{R}_{lm}\left(a_{\mu_A} \mathbf{R}_{{\mu_A}}^{(0)} + a_{\nu_A} \mathbf{R}_{{\nu_A}}^{(0)}\right) =
    \sum_{j_A=0}^l \sum_{k_A=-j_A}^{j_A} C_{j_Ak_A}^{lm} \\
    &\times\mathcal{R}_{j_Ak_A}\left(a_{\mu_A} \mathbf{R}_{{\mu_A}}^{(0)}\right)\,
    \mathcal{R}_{l-j_A,m-k_A}\left(-a_{\nu_A} \mathbf{R}_{{\nu_A}}^{(0)}\right).
\end{split}
\end{align}
To achieve an analogous separation for the first term, we use the indicator function trick introduced in Sec.~\ref{sec:tensord2}. Preserving the same notation, we have:
\begin{align}
    \frac{1}{(a_{\mu_A} + a_{\nu_A})^l} = \sum_{mn} I_{\mu_A,m}\,\frac{1}{(a_m + a_n)^l}\,I_{\nu_A,n},
\end{align}
where the summation runs over unique atom types in the molecule, and hence its length is independent of the system size. By performing analogous manipulations for the term involving the pair of indices $\sigma_B\lambda_B$ and inserting all results back into Eq.~(\ref{eq:mult10}) we finally arrive at after some rearrangements:
\begin{widetext}
\begin{align}
\label{eq:mult11}
\begin{split}
    &|\mathbf{R}_{\mu_A\nu_A}-\mathbf{R}_{\sigma_B\lambda_B}|^{-1} = 
    \sum_{m_1n_1} \sum_{m_2n_2} 
    \sum_{L=0}^{L_{\mathrm{max}}} 
    \frac{(-1)^L\mathcal{I}_{L0}(\mathbf{R}_0)}{(a_{m_2} + a_{n_2})^{L}}
    \sum_{l=0}^L \sum_{m=-l}^l C_{lm}^{L0}\,
    \left(\frac{a_{m_2} + a_{n_2}}{a_{m_1} + a_{n_1}}\right)^l \\
    &\sum_{j_A=0}^l \sum_{k_A=-j_A}^{j_A} C_{j_Ak_A}^{lm}\,
    \left[ I_{\mu_A,m_1}\,\mathcal{R}_{j_Ak_A}\left(a_{\mu_A} \mathbf{R}_{{\mu_A}}^{(0)}\right)\right]
    \left[ I_{\nu_A,n_1}\,\mathcal{R}_{l-j_A,m-k_A}\left(-a_{\nu_A} \mathbf{R}_{{\nu_A}}^{(0)}\right)\right] \\
    &\sum_{j_B=0}^{L-l} \sum_{k_B=-j_B}^{j_B} C_{j_Bk_B}^{L-l,-m}\,
    \left[ I_{\sigma_B,m_2}\,\mathcal{R}_{j_Bk_B}\left(a_{\sigma_B} \mathbf{R}_{{\sigma_B}}^{(0)}\right)\right]
    \left[ I_{\lambda_B,n_2}\,\mathcal{R}_{L-l-j_B,-m-k_B}\left(- a_{\lambda_B} \mathbf{R}_{{\lambda_B}}^{(0)}\right)\right].
\end{split}
\end{align}
\end{widetext}
Note that in this equation, all four indices of the original tensor ($\mu_A\nu_A\sigma_B\lambda_B$) are completely separated, and therefore this approximation can be matched with the CPD expansion given in Eq.~(\ref{eq:cpd-brutal}). As a consequence, we are guaranteed that the effective rank of the tensor $\mathbf{D}^{\circ -1}$ is not larger than the number of terms in Eq.~(\ref{eq:mult11}). Therefore, it remains to count the number of terms in this expression. The length of the first four summations (over $m_1,n_1,m_2,n_2$) is independent of the system size. They collectively introduce only a constant prefactor dependent solely on the number of unique atom types in the system, but are not affected by the truncation parameter $L_{\mathrm{max}}$. Therefore, the number of terms is:
\begin{align}
    \mbox{const}\cdot\sum_{L=0}^{L_{\mathrm{max}}} \sum_{l=0}^L \sum_{m=-l}^l
    \sum_{j_A=0}^l \sum_{k_A=-j_A}^{j_A} \sum_{j_B=0}^{L-l} \sum_{k_B=-j_B}^{j_B} 1.
\end{align}
This nested sum can be evaluated exactly, but what matters from our perspective is that the highest power of $L_{\mathrm{max}}$ that appears in the final expression is $L_{\mathrm{max}}^7$. By recalling that $L_{\mathrm{max}} = c_\epsilon \log_2 N_{\mathrm{AO}}$ we conclude that there exists a constant $c_\epsilon^{(7)}$ independent of the system size such that:
\begin{align}
    \mbox{rank}_{\epsilon/(2R)^2}(\mathbf{D}^{\circ -1}) \leq c_\epsilon^{(7)}\,\log_2^7 N_{\mathrm{AO}}
\end{align}
which is the main result of the present section.

Before moving on, we have to clarify one issue related to the intermediate result expressed in Eq.~(\ref{eq:lmax}), i.e. in order to keep the error of the truncated multipole expansion below a predefined threshold $\epsilon$, the length of the expansion $L_{\mathrm{max}}$ has to increase logarithmically. This might be confusing for readers familiar with the fast multipole method~\cite{rokhlin85,greengard88,greengard94} or its continuous analogues~\cite{white94,white96}, where the length of the multipole expansion used in practice is completely independent of the system size. For example, interaction between two groups of well-separated classical charges is calculated using a multipole expansion with the number of terms dependent only on the desired accuracy level ($\epsilon$), but not on the number of charges (as would be in our case due to the presence of the logarithmic term in $L_{\mathrm{max}}$). The reason for this apparent discrepancy lies entirely within the metric used to quantify the error. In our case, the error is assessed by calculating the Frobenius norm of the error tensor, $\delta |\mathbf{R}_{\mu_A\nu_A}-\mathbf{R}_{\sigma_B\lambda_B}|^{-1}$, see Eqs.~(\ref{eq:mult7})~and~(\ref{eq:mult8}). In FFM and related techniques, one ensures that the maximum absolute error among each individual pair of interacting charges is below a predefined threshold. In our language, this would be equivalent to using the max norm, i.e. $||\mathbf{T}||=\mbox{max}(|\mathbf{T}|)$ rather than the Frobenius norm in the determination of the error. The use of the max norm is a completely legitimate way to measure the error, but it is impractical in the context of CPD, because there are no algorithms with efficiency and simplicity comparable to ALS to optimize the CPD expansion of ERI in norms other than the Frobenius norm. In other fields, there are known applications~\cite{gandy11,shi21,lu21} of other norms, in particular the nuclear tensor norm~\cite{lim10,lim13,friedland18} or $L_1$ norm, to optimize the CPD for some tensors. However, to our knowledge, in the case of ERI such an approach appears to be extremely complicated and computationally costly, and has never been attempted thus far.

\subsection{\label{sec:tensord} Effective rank of the tensor $\mathbf{D}$}

We can now return to Eq.~(\ref{eq:rankd1}) to finalize the discussion of the effective rank of the tensor $\mathbf{D}$. From Sec.~\ref{sec:tensord2} we know that the exact rank of the tensor $\mathbf{G}$ is independent of the system size, while from Sec.~\ref{sec:tensordminus} we found an upper bound for the effective rank of the tensor $\mathbf{D}^{\circ -1}$. Putting these findings together in Eq.~(\ref{eq:rankd1}) we conclude that there exists a constant $d_\epsilon^{(7)}$ such that:
\begin{align}
    \mbox{rank}_{\epsilon}(\mathbf{D}) \leq d_\epsilon^{(7)}\,\log_2^7 N_{\mathrm{AO}}.
\end{align}
At this point, we have all necessary information to quantify the rank of the ERI tensor.

\subsection{\label{sec:summary} Final lower bound for the rank of ERI tensor}

Now we gather all results from the previous sections and determine the final conclusions for the CPD rank of the ERI tensor as a function of the system size. First, we recall the inequality derived in Sec.~\ref{sec:subprop}:
\begin{align}
\label{eq:con1}
    \mbox{rank}_{\epsilon-\delta}(\mathbf{T}) \geq 
    \frac{\mathrm{rank}_\eta(\mathbf{N})}{\mathrm{rank}_\epsilon(\mathbf{D})}.
\end{align}
From the results of Sec.~\ref{sec:tensord} we know that there exists a constant $d_\epsilon^{(7)}$ such that $\mathrm{rank}_\epsilon(\mathbf{D})\leq d_\epsilon^{(7)}\,\log_2^7 N_{\mathrm{AO}}$ for any $0<\epsilon<1$ independently of the system size. Simultaneously, from Sec.~\ref{sec:tensorn} we know that $\mbox{rank}_\eta(\mathbf{N}) > c_\eta^{(2)} N_{\mathrm{AO}}^2$ for any $0<\eta<1$. This brings us close to establishing a lower bound for the rank of $\mathbf{T}$, but since the value of $\eta$ is different from $\epsilon$ we have to analyze the conditions under which $\eta<1$ such that the results from Sec.~\ref{sec:tensorn} hold. Recall that $\eta=\big[\mbox{max}(|\mathbf{N}\odot \mathbf{D}^{\circ -1}|)+\mbox{max}(|\mathbf{D}|)\big]\epsilon+\epsilon^2$ and the requirement that:
\begin{align}
    \eta=\big[\mbox{max}(|\mathbf{N}\odot \mathbf{D}^{\circ -1}|)+\mbox{max}(|\mathbf{D}|)\big]\epsilon+\epsilon^2 < 1,
\end{align}
can be replaced by a stronger one:
\begin{align}
    \epsilon < \frac{1}{\mbox{max}(|\mathbf{N}\odot \mathbf{D}^{\circ -1}|)+\mbox{max}(|\mathbf{D}|)+1}.
\end{align}
If the latter inequality holds, then the requirement $\eta<1$ is automatically satisfied. Our next goal is to relate the denominator in the above expression to the spatial characteristics of the model system. Because the tensor $\mathbf{D}$ gathers distances between points within spheres $A$ and $B$ we have $\mbox{max}(|\mathbf{D}|)\leq 2R$ and $\mbox{min}(|\mathbf{D}|)\geq 2R/3$. The largest element of the tensor $\mathbf{N}$ is also easily bounded by noting that it consists of products of overlap integrals that cannot exceed $1$ in value due to the normalization condition of the AO basis functions. Therefore, we have $\mbox{max}(|\mathbf{N}|)\leq 1$ and $\mbox{max}(|\mathbf{N}\odot \mathbf{D}^{\circ -1}|)\leq \frac{3}{2R}$. This leads to an explicit bound for the $\epsilon$ parameter:
\begin{align}
    \epsilon < \frac{1}{\frac{3}{2R}+2R+1}.
\end{align}
Therefore, for any value of $\epsilon$ such that:
\begin{align}
    \frac{1}{\frac{3}{2R}+2R+1} > \epsilon > \delta,
\end{align}
we have $\eta<1$ and hence the inequality in Eq.~(\ref{eq:con1}) simplifies to the form:
\begin{align}
\label{eq:con2}
    \mbox{rank}_{\epsilon-\delta}(\mathbf{T}) > 
    \frac{c_\eta^{(2)}\,N_{\mathrm{AO}}^2}{d_\epsilon^{(7)}\,\log_2^7 N_{\mathrm{AO}}}.
\end{align}
This is equivalent to the main Theorem~\ref{th:main} of this work after some rearrangements.

\section{\label{sec:discussion} Discussion}

\subsection{\label{sec:example} Practical example}

To illustrate the conditions under which Theorem~\ref{th:main} holds in a practical situation, let us once again consider the example of water clusters. As an example, consider the cc-pVDZ AO basis set~\cite{dunning89} in which $a_{\mathrm{min}}=0.122$ (uncontracted 1s GTO of the hydrogen atom). The parameter $\rho$ is on the order of unity, so we set $\rho=1$ to simplify the discussion.  From previous applications of the CPD approximation available in the literature~\cite{benedikt11,pierce21}, a reasonable value of the parameter $\epsilon$ that is sufficient in most application is $\epsilon=10^{-3}$. Consider the constraints imposed on the value of $\epsilon$ by the statement of Theorem~\ref{th:main}. First, the lower bound on $\epsilon$ is given by the parameter $\delta$. Even for a modest size of the system $R=20$, $\delta\approx 2\cdot10^{-4}$, and for $R=25$ we have $\delta\approx 3\cdot10^{-9}$, while for $R=30$ the value of $\delta$ is at the level of numerical noise (in double precision arithmetic). The radius $R=20$ corresponds to roughly 150 water molecules, which is perfectly within the realm of modern quantum chemical simulations, even at the correlated level. Consider now the upper bound, $\epsilon < \frac{1}{\frac{3}{2R}+2R+1}$. Assuming that we wish to apply $\epsilon=10^{-3}$, the radius of the system must be less than roughly $R=500$. This corresponds to a comically large water cluster comprising roughly 2.5 million molecules. This shows that the applicability range of Theorem~\ref{th:main} is indeed broad and covers molecules that are frequently encountered in practice.

\subsection{\label{sec:origin} Origin of the rank explosion}

If we wanted to briefly summarize what the origins of the rank explosion of the CPD decomposition are, i.e. why the effective rank cannot grow linearly with the system size, the reasons can be traced back to the fundamental inability of the global CPD approximation to reproduce the multipole expansion of the long-range ERI while keeping the rank linear (combined with the fact that the Coulomb integrals vanish very slowly as a function of the distance between charge distributions). In fact, the proof of Theorem~\ref{th:main} presented in Sec.~\ref{sec:lower} is based on selecting a certain subset of ERI which can be approximated sufficiently accurately by the first term of the multipole expansion, i.e. monopole-monopole term; see the definition of $\mathbf{T}_{\mathrm{sub}}^{\mathrm{(mon)}}$ in Eq.~(\ref{eq:tmon}). This is somewhat counterintuitive because we know from FFM or related techniques that such long-range contributions can be evaluated with $\mathcal{O}(N)$ or $\mathcal{O}(N\log N)$ computational cost. However, the difference lies in the fact that the FFM operates at a local level, while the CPD attempts to provide a global approximation for the entire ERI tensor. More precisely, in the FFM approach, the system is subdivided into boxes. The pairs of boxes which are sufficiently well-separated are treated at the highest possible level, i.e. without subdividing them further. However, each pair of such well-separated boxes is treated using their own multipole expansion carried out at a local level, i.e. the expansion points of the multipole expansion coincide with the centers of the boxes for a given pair. Therefore, in the FFM approach, no single global form of the approximation is utilized. By contrast, the CPD format attempts to find a single approximation that holds equally well for all classes of integrals. As a side note, we mention that it is possible to envisage a combination of FFM and CPD, where the system is first divided into boxes analogously as in FFM and then a local CPD approximation is established for each pair of boxes. This will likely avoid the rank explosion, but the resulting approximation will no longer be global. This would distinguish it from well-known ERI formats such as density-fitting, Cholesky decomposition, or tensor hypercontraction, and likely make it less appealing in practice, especially at the correlated level.

\subsection{\label{sec:implication} Implications for use of CPD in quantum chemistry}

The fact that the effective rank of the CPD approximation for ERI cannot increase linearly with the system size is, of course, a significant setback. Nevertheless, we argue that the CPD approximation can still be extremely useful in many applications in quantum chemistry. Indeed, in some situations, rank explosion can be avoided by preparing a CPD suited for a particular application. As an example, consider the construction of the exchange contribution to the Fock matrix which is the major bottleneck in Hartree-Fock and hybrid DFT calculations for large systems~\cite{challacombe97,ochsenfeld98}. The exchange part is defined as:
\begin{align}
    F_{\mu\nu}^{\mathrm{ex}} = -\sum_{\sigma\lambda} D_{\sigma\lambda} (\mu\sigma|\nu\lambda),
\end{align}
where $D_{\sigma\lambda}$ is the density matrix in the AO basis. It is well-known that for insulators, the density matrix vanishes exponentially with the distance between the centers at which the AO orbitals with indices $\sigma\lambda$ are located. Therefore, all ERI $(\mu\sigma|\nu\lambda)$ in the above formula in which the orbitals $\sigma\lambda$ are sufficiently far away will bring no appreciable contribution to the Fock matrix, independent of their actual magnitude. The same is true when the exact integrals $(\mu\sigma|\nu\lambda)$ in the above formula are replaced by the CPD approximation, Eq.~(\ref{eq:cpd-brutal}). Therefore, when the CPD approximation is optimized, the subset of integrals $(\mu\sigma|\nu\lambda)$ where the AO orbitals $\sigma\lambda$ are separated by a large distance can be completely neglected or treated crudely, because this will have no impact on the accuracy of the Fock matrix construction. This will naturally lead to a reduction of the rank of the CPD expansion, possibly even to linear with the system size, because $\mathcal{O}(N_{\mathrm{AO}}^2)$ pairs of distant orbitals are now excluded. Only pairs $\sigma\lambda$ corresponding to orbitals that are in close spatial proximity to each other must be retained, and the number of such ``strong'' pairs increases as $\mathcal{O}(N_{\mathrm{AO}})$.

Of course, a CPD approximation obtained in this way is not universal. Although it is likely to preserve a reasonable accuracy level in the computation of the exchange contribution to the Fock matrix, it will lead to huge errors when applied to the construction of the Coulomb part. Therefore, a specific CPD approximation can be designed with a specific goal in mind rather than as a universal tool that can be applied uniformly across all quantum chemistry methods. This is a considerable departure from methods such as Cholesky decomposition or tensor hypercontraction which are not biased towards (or at least can be unbiased) any particular task in quantum chemistry.

Finally, note that our treatment of the CPD approximation was based on ERI expressed in the AO basis. However, in many calculations in quantum chemistry, especially at the correlated level, one employs ERI in the molecular orbitals (MO) basis. Let us denote the occupied MO by the indices $i,j,k,\ldots$ and the virtual MO by $a,b,c,\ldots$ As discussed above, the inability of the CPD format with a linear rank to represent ERI sufficiently accurately can be traced back to the subtensor $\mathbf{T}_{\mathrm{sub}}^{\mathrm{(mon)}}$, Eq.~(\ref{eq:tmon}), which gathers a specific subset of long-range integrals. This class of integrals vanishes too slowly to become negligible for systems of size reachable in practice and, due to their structure, superlinear CPD rank is needed to represent them accurately. However, consider now the $(ia|jb)$ subclass of ERI in the MO basis. For this class of integrals, the overlap integrals $S_{ia}$ vanish due to the orthonormality of the MO. Therefore, at long-range these integrals are not dominated by the monopole-monopole term of the multipole expansion but rather by the analogous dipole-dipole term. The latter term vanishes much faster ($1/R^3$) compared to the former ($1/R$) as a function of the distance between the charge distributions $ia$ and $jb$. Therefore, the upper bound for $\epsilon$ present in Theorem~\ref{th:main} is no longer valid for $(ia|jb)$, and in this case a linear CPD rank could be achieved for systems of reasonable size. This is potentially important for some quantum-chemistry methods, such as second-order perturbation theory (MP2) or random-phase approximation (RPA), where only certain subclasses of ERI are needed to evaluate the correlation energy~\cite{scuseria08,eshuis10}. However, if we go to higher-order methods, such as coupled cluster~\cite{crawford07,bartlett07}, all possible classes of MO integrals are generally required. While the $(ia|jb)$ class vanishes as $1/R^3$, the $(ia|bc)$ and $(ia|jk)$ classes vanish as $1/R^2$ (monopole-dipole), while the remaining: $(ij|kl)$, $(ij|ab)$, and $(ab|cd)$ vanish as $1/R$ -- the same rate as the integrals in the AO basis. Therefore, when all classes of integrals are required, there is no immediate benefit in switching from AO to MO basis representation. Of course, it is worth pointing out that AO-MO basis transformation is not free of computational cost, even if only a subset of MO integrals is required in a particular application.

\subsection{\label{sec:other} Comparison with other low-rank approximations}

To put the findings of this work in a proper context, it is useful to compare CPD with other low-rank ERI decompositions (see Refs.~\onlinecite{hoy13,hoy15} for an overview of decompositions used in the literature). Of particular relevance here is the tensor hypercontraction~\cite{hohenstein12,parrish12} (THC) which takes the form:
\begin{align}
\label{eq:thc}
    (\mu\nu|\sigma\lambda) = \sum_{RS} X_{\mu R}\,X_{\nu R}\,Z_{RS}\,X_{\sigma S}\,X_{\lambda S}.
\end{align}
Most often, the factors $X_{\mu R}$ are equal to weighted values of AO basis functions on a three-dimensional spatial grid, while the matrix $Z_{RS}$ is obtained by least-squares fitting to minimize the error in the Frobenius norm, see Ref.~\onlinecite{matthews20} for details. As the size of the numerical grid is tied to the number of atoms in the system, the length of each summation over $R$ and $S$ in Eq.~(\ref{eq:thc}) increases strictly linearly with the system size. It is important to realize that the THC factorization can be rewritten in the CPD form, Eq.~(\ref{eq:cpd-brutal}), without loss of generality and without introducing any extra error. To this end, each pair of the $RS$ indices is taken as one of the factors in Eq.~(\ref{eq:cpd-brutal}) and assigned a collective index $r$, so CPD is just a ``vectorized'' form of THC in this particular case. Moreover, since Eq.~(\ref{eq:thc}) preserves all symmetries of the two-electron integrals, the same would be true for the resulting CPD. This means that by proper rearrangements, THC can be put in the form~(\ref{eq:cpd-symm}) which we postulated as a special form of CPD adapted to the natural symmetries of ERI. This connection between CPD and THC also establishes an upper bound for the rank of CPD of ERI. Once THC is ``vectorized'', each pair $RS$ constitutes a single factor in Eq.~(\ref{eq:cpd-brutal}), so the rank of CPD becomes strictly quadratic in the system size. Of course, this way of generating CPD directly from THC is not practical, as it offers no computational benefits over THC. Nonetheless, if a deterministic CPD algorithm for ERI is ever found, it should deliver the rank bounded from below by $\propto N_{\mathrm{AO}}^2/\log_2^7 N_{\mathrm{AO}}$, see Theorem~\ref{th:main}, and from above by $\propto N_{\mathrm{AO}}^2$. It is worth mentioning that when CPD is found numerically such as with ALS or more advanced algorithm, the resulting decomposition does not need to follow this upper bound as ALS is not guaranteed to find the best possible approximation of a given rank due to initialization issues and possibility of encountering local minima.

Another decomposition that is closely related to CPD is the so-called double factorization~\cite{peng17,motta19} that has the general form:
\begin{align}
    (\mu\nu|\sigma\lambda) = \sum_{tRS} X_{\mu R}^t\,X_{\nu R}^t\,Z_{RS}^t\,X_{\sigma S}^t\,X_{\lambda S}^t.
\end{align}
and has gained popularity in recent years due to its applications in quantum computing algorithms, see Refs.~\onlinecite{motta21,cohn21,hohenstein23} and references therein. Using a similar ``vectorization'' strategy, this decomposition can also be written in the CPD format~(\ref{eq:cpd-brutal}). However, since the factors $X_{\mu R}^t$ in this expansion are typically forced to obey certain orthonormality constraints, the rank in the double factorization is typically higher than of THC. To the best of our knowledge, THC provides the tightest upper bound for the CPD rank out of all ERI factorizations proposed in literature.

\section{\label{sec:numerical} Numerical examples}

\begin{figure}
    \centering
    \hspace{-0.5cm}\includegraphics[width=0.7\linewidth]{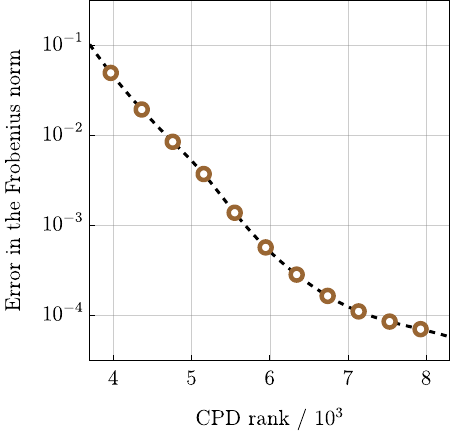}\vspace{0.5cm} \\
    \hspace{-0.5cm}\includegraphics[width=0.7\linewidth]{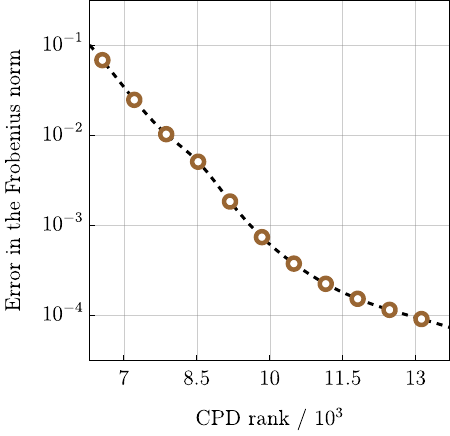}\vspace{0.5cm} \\
    \hspace{-0.5cm}\includegraphics[width=0.7\linewidth]{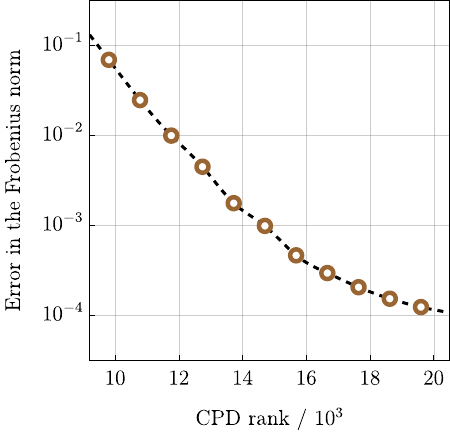}
    \caption{Error of the CPD expansion (logarithmic scale on the $y$ axis) as a function of the CPD rank for the model system described in Sec.~\ref{sec:numerical}. The top, middle, and bottom panels correspond to $n=21,27,33$, respectively, water molecules within the spheres~$A$~and~$B$. The black dashed lines represent the third-order $B$-splines interpolation of the data points.}
    \label{fig:errors}
\end{figure}

In this section, we numerically illustrate the growth of the CPD rank for a specific ERI subtensor for an actual physical system. The goal is to determine an empirical function that captures the scaling behavior of the CPD rank vs. the system size, and contrast it with the lower bound found by mathematical means. First, we construct a minimalist analog of the $\mathbf{T}_{\mathrm{sub}}$ subtensor defined in Eq.~(\ref{eq:tsub}). To this end, we take a set of water clusters (H$_2$O)$_n$ with $n=3,6,\ldots, 36$. Next, we find the minimal radii $R_n/3$ of a sphere that completely encloses a cluster with $n$ water molecules. The center of this sphere coincides with the geometric center of the cluster and the system is translated so that this center lies on the $z$ axis at position $(0,0,2R_n/3)$ for each $n$. This setup is identical to the construction described in Sec.~\ref{sec:system} and hence we refer to this cluster by the letter $A$ and similarly for the sphere $A$ with center at $(0,0,2R_n/3)$. Next, we generate the cluster $B$ by assuming that the system has a center of inversion located at $(0,0,0)$. The cluster $B$ has the same number of water molecules and is enclosed by a sphere $B$ of the same radius with the center located at $(0,0,-2R_n/3)$. In this minimalist example, each hydrogen and oxygen atom in both clusters is represented by a single $1s$ GTO with exponents $0.122$ and $0.3023$, respectively. These particular values were taken from the cc-pVDZ basis set of the respective atoms~\cite{dunning89}. The subtensor $\mathbf{T}_{\mathrm{sub}}$ considered here consists of all ERI of type $(\mu_A \nu_A|\sigma_B\lambda_B)$, where the subscripts denote the groups to which the AO belong, and these integrals are evaluated exactly using the formula Eq.~(\ref{eq:tsub}). The Cartesian coordinates of the water clusters used in this demonstration are given in the supplementary material.

In order to find the CPD approximation~(\ref{eq:cpd-brutal}) for each system, we carry out the ALS optimization in the Frobenius norm. For each $n$ we initialize the optimization using a small CPD rank of $100$ and increase the rank in relatively small steps of $6n$. At each step, the CPD approximation is fully optimized in consecutive ALS sweeps when three of the factors $A_{\mu r}\,B_{\nu r}\,C_{\sigma r}\,D_{\lambda r}$ in Eq.~(\ref{eq:cpd-brutal}) are fixed while the remaining one is optimized using the conventional least-squares approach. Therefore, a single ALS sweep includes four separate minimization steps. ALS optimization is stopped when the difference in the Frobenius norm error between $10$ consecutive ALS sweeps falls below $10^{-8}$. Note that ALS optimization typically converges very slowly, and thousands of sweeps are typically necessary to reach this demanding convergence goal. The optimization is initialized randomly by drawing numbers for a uniform distribution in the interval $[-1,+1]$. This randomization is carried out both at the beginning of the ALS optimization and during each rank expansion procedure. To speed up the convergence of the ALS procedure, we apply a simple Nesterov acceleration~\cite{nesterov13}, that is, a line search along the current direction of the gradient evaluated as a byproduct of the least-squares step. Note that the ALS procedure with random initialization frequently suffers from numerical instabilities caused by linear dependencies between the CPD factors. To avoid that, we used two remedies. First, Tikhonov regularization~\cite{tikhonov95} (also known as ridge regression) was used by adding a small ($10^{-9}$) constant to the diagonal of the normal matrix in each least-squares step. Second, the first three CPD factors, $A_{\mu r}\,B_{\nu r}\,C_{\sigma r}$, were always constrained to be column-normalized, and only the fourth factor $D_{\lambda r}$ was allowed to carry an arbitrary weight.

In Fig.~\ref{fig:errors} we show the error of the CPD expansion as a function of the CPD rank for the model system described above. As an example, we provide the results for $n=21,27,33$ water molecules within the spheres~$A$~and~$B$. The error of the decomposition initially decreases exponentially (corresponding to a linear curve on the logarithmic scale) and then slightly deviates upward. This behavior is remarkably consistent for all cluster sizes shown in Fig.~\ref{fig:errors}, but also for the remaining values of $n$ considered here. The data shown in Fig.~\ref{fig:errors} is not immediately useful from the point of view of this work, because we want to access the inverse relationship, i.e. the minimum CPD rank for each $n$ that delivers the predefined accuracy $\epsilon$. However, such a rank is difficult to calculate directly because the ALS procedure requires one to specify the rank upfront. To find the precise rank that delivers a given accuracy, one would need to increase the rank in extremely small steps, which is not practical. However, we can approximately find the minimum rank with error no larger than the specified $\epsilon$ by exploiting the regular and smooth behavior of the dependence shown in Fig.~\ref{fig:errors}. To this end, we perform an interpolation of $\epsilon$ as a function of the CPD rank using the third-order $B$-spline method. Knowing this function analytically, we can numerically invert it and then calculate the minimum rank that leads to an error no larger than the specified constant $\epsilon$ for each size of the cluster $n$. This leads to the data shown in Fig.~\ref{fig:ranks} where we plot the minimum rank determined for three choices of $\epsilon$, namely $10^{-2}$, $10^{-3}$, and $10^{-4}$, as a function of the cluster size~$n$.

\begin{figure}
    \centering
    \hspace{-0.5cm}\includegraphics[width=0.7\linewidth]{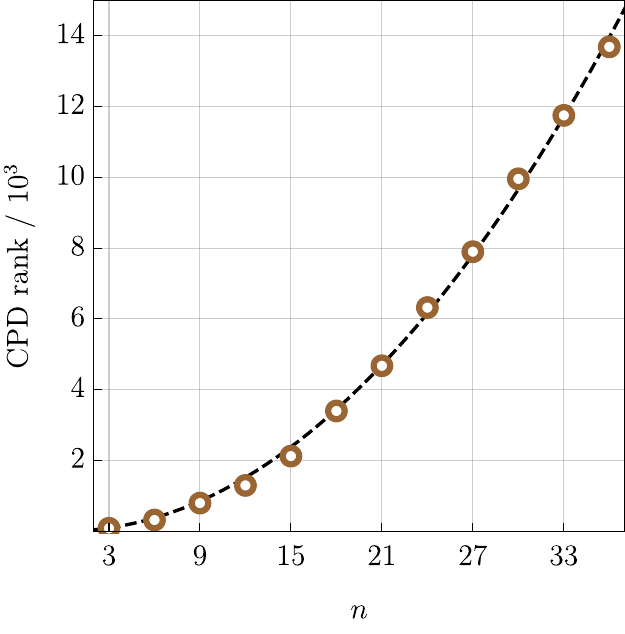}\vspace{0.5cm} \\
    \hspace{-0.5cm}\includegraphics[width=0.7\linewidth]{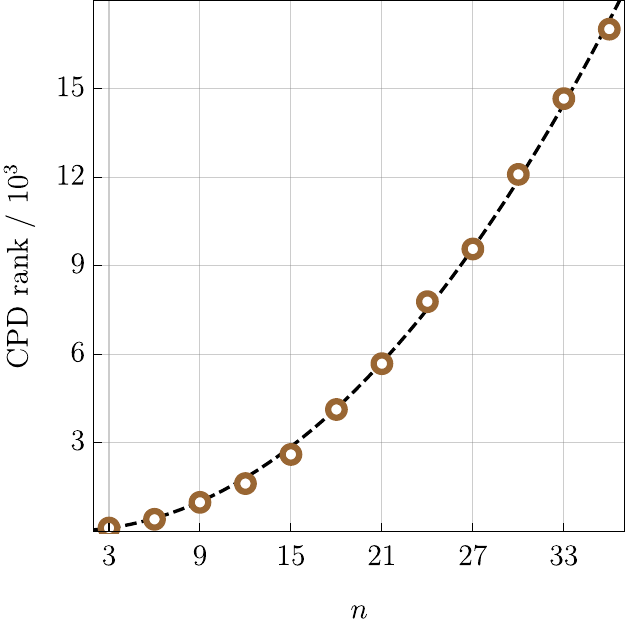}\vspace{0.5cm} \\
    \hspace{-0.5cm}\includegraphics[width=0.7\linewidth]{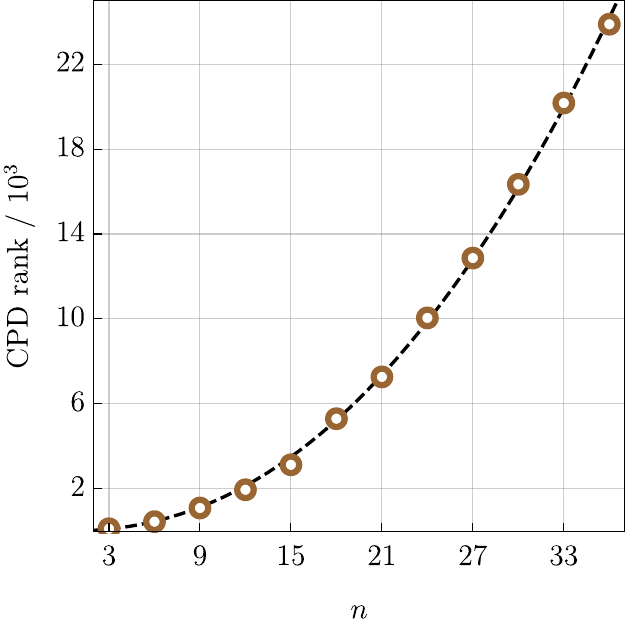}
    \caption{The rank of the CPD approximation necessary to reach the decomposition error $10^{-2}$ (top panel), $10^{-3}$ (middle panel), and $10^{-4}$ (bottom panel) as a function of the number of water molecules $n$ within the spheres~$A$~and~$B$. The dashed black lines are functions in the form $a\cdot n^b$ obtained by least-squares fit to the data points.}
    \label{fig:ranks}
\end{figure}

We can judge the functional dependence of the rank vs. the size of the cluster by fitting the data points with a suitable functional form. For this purpose, we compare the quality of the fit for a set of one-parameter functions: $an$, $an^2$, $an^2\log n$, $an^2/\log n$, as well as two-parameter functions: $an^b$, $an+bn^2$, $an^2+bn^2\log n$, $an^2+bn^2/\log n$, $an+bn^2\log n$, $an+bn^2/\log n$, where $a$, $b$ are adjustable parameters determined by the least-squares procedure. Note that the rank of the CPD approximation should tend to zero in the limit $n\rightarrow 0$ and this constraint has been imposed on each functional form given above. We also initially considered other functional forms allowed, in principle, by Theorem~\ref{th:main}, such as $an^2/\log^m n$ with $m\geq 2$ and their two-parameter generalizations, but all of them provided less accurate representation of the data points compared to the functions listed above.

\begin{table}
\caption{\label{tab:aic}
Comparison of various one- and two-parameter functions in representation of the rank of the CPD approximation (decomposition thresholds: $\epsilon=10^{-2}, 10^{-3}, 10^{-4}$) for the model system of water clusters.
}
\begin{ruledtabular}
\begin{tabular}{lccc}
 function & \multicolumn{3}{c}{Akaike information criterion (AIC)} \\
\cline{2-4}
          & $\epsilon=10^{-2}$ & $\epsilon=10^{-3}$ & $\epsilon=10^{-4}$ \\
\hline
\multicolumn{4}{c}{One-parameter functions} \\
\hline
$an$          & 217.6 & 223.1 & 232.8 \\
$an^2$        & 161.1 & 164.2 & 187.2 \\
$an^2\log n$  & 182.2 & 183.7 & 171.8 \\
$an^2/\log n$ & 189.1 & 195.4 & 209.6 \\
\hline
\multicolumn{4}{c}{Two-parameter functions} \\
\hline
$an^b$               & 162.8 & 163.2 & 164.8 \\
$an+bn^2$            & 161.8 & 161.3 & 161.9 \\
$an^2 + bn^2\log n$  & 162.8 & 163.1 & 163.2 \\
$an^2 + bn^2/\log n$ & 162.3 & 161.9 & 161.2 \\
$an + bn^2\log n$    & 169.7 & 169.6 & 165.9 \\
$an + bn^2/\log n$   & 157.0 & 158.5 & 168.5 \\
\end{tabular}
\end{ruledtabular}
\end{table}

To quantify the ability of each function to represent the underlying data, we consider the Akaike information criterion (AIC)~\cite{akike74}. It is calculated as $2k-2\log L$, where $k$ is the number of parameters in the functional model (one or two in our case) and $L$ is the likelihood function, which is calculated straightforwardly from the least-squares error of the fit. The use of AIC enables us to compare functions with different numbers of adjustable parameters, as the first term ($2k$) effectively penalizes overfitting. Note that a smaller value of AIC signifies a better fit to the data. Moreover, a difference in AIC between two models smaller than $2$ is usually considered statistically insignificant, while a difference greater than $5$ suggests that one of the models is significantly lacking~\cite{burnham02}. Additionally, we note that in a preliminary stage of the analysis, we also tested more complicated functional forms to represent the available data, e.g. rational functions combining the power and logarithmic dependence on $N$. These functional forms had a higher number of adjustable parameters, and some of them provided a slightly better fit than the candidates listed above. However, the AIC calculated for them was actually larger than for the best two-parameter functions considered below, which signals overfitting. Therefore, they were excluded from the final analysis.

In Table~\ref{tab:aic} we gather all proposed fitting functions and the AIC parameter calculated from the fit. We begin our analysis of these results by considering the one-parameter functions. None of the proposed formulas provides a consistently accurate fit of the data across all values of $\epsilon$. In fact, for $\epsilon=10^{-2}$ and $\epsilon=10^{-3}$, a simple quadratic function $an^2$ provides the lowest AIC, i.e. the best representation, but its performance deteriorates significantly for $\epsilon=10^{-4}$. On the other hand, the quadratic-log function $an^2 \log n$ performs significantly worse than $an^2$ for $\epsilon=10^{-2}$ and $\epsilon=10^{-3}$, but is the best performer for $\epsilon=10^{-4}$. The remaining two functions: $an$ and $an^2/\log n$ perform much worse and can be definitively excluded from the list of candidates. In particular, the linear function consistently provides the worst representation, which is also apparent in Fig.~\ref{fig:ranks}, with the difference in AIC compared to the best model of the order of several tens. Overall, the data from Table~\ref{tab:aic} suggests that the growth of the effective rank as a function of the system size is not captured by any of the one-parameter functions in a consistent manner. 

Moving on to the two-parameter functions, the first conclusion is that formula $an + bn^2\log n$ can be safely excluded from the candidate list. For each threshold, it provides an AIC that is larger compared to more accurate models by a statistically significant margin. Of the remaining models, $an + bn^2/\log n$ provides the lowest AIC for $\epsilon=10^{-2}$ and $\epsilon=10^{-3}$, but it deteriorates significantly (by a factor of $10$) for $\epsilon=10^{-4}$. This can be seen as an argument against this candidate, as we expect that a function that correctly captures the underlying physics of the problem behaves consistently for all reasonable values of the threshold. The remaining four functions: $an^b$, $an+bn^2$, $an^2 + bn^2\log n$, and $an^2 + bn^2/\log n$ achieve this requirement because the AIC values fluctuate insignificantly (less than one) as a function of the threshold $\epsilon$. The function $an+bn^2$ provides the lowest AIC on average, with $an^2 + bn^2/\log n$ being the second best. However, the difference in AIC between all four is too small (less than $2$) to draw statistically significant conclusions. The argument against the function $an^b$ could be the fact that there is no theoretical basis to justify such a dependence. It can be seen as a function that captures the growth of the rank as an effective model but nevertheless does not reflect the underlying physics.

In summary, we found that for sufficiently large $n$ the growth of the rank of the CPD approximation is captured either by $n^2$ or by $n^2\log n$ form. There are insufficient data to exclude either of them. This is consistent with the statement of Theorem~\ref{th:main}. Other asymptotic forms, which are not in principle excluded by Theorem~\ref{th:main}, such as $n^2\log^m n$ or $n^2/\log^m n$ with $m\geq 2$ are not supported by numerical results (these forms were not explicitly included in Table~\ref{tab:aic} but were tested at a preliminary stage and removed from the pool of candidates due to consistently much higher values of the AIC). Linear growth of the rank is definitively excluded. Moreover, a two parameter function is required to capture the growth of the effective rank in a consistent manner across different decomposition thresholds. This suggests that the rate of growth may undergo a transition as a function of $n$, for example, increasing purely quadratically for small $n$ but then shifting to $n^2 \log n$ for larger $n$, or \emph{vice versa}.

It is also interesting to compare our numerical findings on rank growth similar data published in the literature. Unfortunately, such systematic studies with increasing CPD rank are sparse, and it is much more common to relate the CPD rank to another quantity, such as the number of orbitals or the size of an auxiliary basis set, allowing them to grow in unison with the system size. Probably the first foundational benchmark of the CPD format was conducted by Benedikt et al.~\cite{benedikt11}. They observed that the effective rank of the decomposition scales roughly between $N^{1.7}$ and $N^{2.6}$ with respect to the number of basis functions, depending on the composition of the atomic basis and the adopted thresholds. The CPD decomposition was also applied more recently in a series of papers by Pierce and collaborators.~\cite{pierce21,pierce22,pierce25,pierce25b} In particular, in Ref.~\cite{pierce26} it has been shown that the CPD rank increases linearly for a set of water clusters and linear alkanes. However, it is unclear whether this is a result of a relatively small size of the systems that were studied (less than $10$ non-hydrogen atoms), such that the rank is swarmed by a large number of short-range interactions between neighboring atoms.

As a final remark, we address the concern that the superlinear growth of the effective rank of the ERI tensor may be mitigated if, instead of minimizing the error norm $|\mathbf{T}-\bar{\mathbf{T}}|$, we used the relative error $\frac{|\mathbf{T}-\bar{\mathbf{T}}|}{|\mathbf{T}|}$ as the optimization target. This approach would take into account that the number of significant integrals naturally increases with the systems size and the same is true for its norm $|\mathbf{T}|$. Let us analyze the consequences of such a modification. For sufficiently large systems, the norm of the ERI tensor becomes proportional to some power of the system size, i.e. $|\mathbf{T}|=c n^\alpha$. The value of $\alpha$ may depend on the composition of the system and its geometry, but certainly $\alpha>0$. Then the requirement that the relative error is below some threshold $\epsilon$, i.e. $\frac{|\mathbf{T}-\bar{\mathbf{T}}|}{|\mathbf{T}|}<\epsilon$, is equivalent to $|\mathbf{T}-\bar{\mathbf{T}|}<\epsilon |\mathbf{T}| = c\epsilon n^\alpha = \epsilon'$. We see that this approach is equivalent to simply performing the same ALS optimization as above, but with increasing the effective threshold $\epsilon'$ in proportion to some power of the system size. We attempted to determine CPD in this way using $\alpha=1/2$ or $\alpha=1$ for the model systems discussed above. Unfortunately, the unintended consequence of this approach is the fact that not only small and numerically insignificant integrals are effectively excluded from the optimization but also large integrals, which are the most important for subsequent evaluation of any quantity, become progressively less accurately represented. As a result, the elementwise maximum absolute error of the decomposition increased rapidly with the system size, instead of staying approximately constant as in the original ALS approach. Therefore, it is unlikely that the CPD determined with a varying threshold $\epsilon'$ can retain sufficient precision to be useful in practice for large systems. 

\section{\label{sec:concl} Conclusions}

In this paper, we have studied the effective rank of the canonical polyadic decomposition applied to the electron repulsion integrals. We have defined a certain model system that represents a broad class of molecules/clusters encountered in practice that can be simultaneously expanded in all spatial directions in a systematic way. We have formulated and mathematically proven a theorem stating that the effective rank of the CPD of ERI is bounded from below by a quantity proportional to $N_{\mathrm{AO}}^2/\log_2^7 N_{\mathrm{AO}}$, where $N_{\mathrm{AO}}$ is the number of atomic orbitals in the molecule, under mild conditions imposed on the decomposition threshold $\epsilon$. As a result, while a subquadratic growth of the CPD rank is not excluded, a linear relationship between the rank and $N_{\mathrm{AO}}$ cannot hold universally. The origin of this superlinear rank explosion has been discussed and traced back to the fundamental inability of a linear-rank global CPD format to represent long-range interactions between orbital products in the regime where the multipole expansion is valid. These findings are numerically confirmed by optimizing the CPD approximation for a subset of ERI for a set of water clusters of increasing size. By analyzing the effective rank given as a function of the number of molecules in the cluster (sufficient to reach a fixed accuracy $\epsilon$), we have found that the rate of growth of the rank is close to quadratic. This is consistent with the statement of the theorem proved in this work. The implications of these findings for the use of the CPD format to represent electron repulsion integrals in quantum chemistry have been analyzed. It has been argued that such a global CPD with quadratic effective rank is likely not competitive with other formats, primarily with THC, especially if one takes into account that robust algorithms are available to determine the latter~\cite{lu15,matthews20,zhang25,zhu26,hillers26}. Nevertheless, it has been pointed out that the CPD format can still be massively useful in simplifying some basic operations in quantum chemistry. The idea is to abandon a global CPD that can be applied in any context, similarly as Cholesky decomposition or THC, and determine a CPD for a subset of integrals that is relevant in a particular application, e.g. that bring a non-negligible contribution to the quantity of interest. In future work, we will shift our focus to designing deterministic and optimization-free algorithms that may fulfill this role.

\section*{Supplementary Material}

See the Supplementary Material for the structures of the water clusters used in the numerical calculations in the paper. The geometries of all water clusters are given in XYZ format, \AA ngstr\"om units. The name of each file is ``slab-n.xyz'' where ``n'' is the number of water molecules in the cluster.

\begin{acknowledgments}
This work was supported by the National Science Centre, Poland, under research project 2024/54/E/ST4/00253. We gratefully acknowledge Poland’s high-performance Infrastructure PLGrid (HPC Centers: ACK Cyfronet AGH, PCSS, CI TASK, WCSS) for providing computer facilities and support within computational grants PLG/2025/018692.
\end{acknowledgments}

\section*{Data Availability Statement}

The data that support the findings of this study are available within the article and its supplementary material.

\appendix

\section{\label{sec:appa} Rank of the overlap matrix}

The objective of this Appendix is to prove Theorem~\ref{th:overlap} that was used in the main text. However, we first need to prove a lemma that the maximum eigenvalue of the overlap matrix, denoted $e_{\mathrm{max}}$ is bounded from above by a constant $K$, i.e. $e_{\mathrm{max}}< K$, that depends just on the composition of the AO basis set for each constituting atom, but does not change as the total number of atoms in the system increases, i.e. $K=\mathcal{O}(1)$.

To show this, we invoke the Gershgorin circle theorem~\cite{Horn_Johnson_1985_ch6} stating that
every eigenvalue $e_r$ of $\mathbf{S}$ lies within at least one disk:
\begin{align}
\label{eq:upper_bound}
    \left| e - S_{\mu\mu} \right| \leq \sum_{\nu\neq\mu} \left| S_{\mu\nu} \right|.
\end{align}
Since $S_{\mu\mu} = 1$, we obtain for the largest eigenvalue:
\begin{align}
    e_{\mathrm{max}} \leq \max_\mu\left( 1 + \sum_{\nu\neq\mu} \left| S_{\mu\nu} \right| \right) = 
    \max_\mu\left( \sum_{\nu} \left| S_{\mu\nu} \right| \right).
\end{align}
All overlap integrals between arbitrary GTOs are bounded from above in the sense that $\left| S_{\mu\nu} \right| \leq Ce^{-\alpha R_{\mu\nu}^2}$, where the constants $C>0$ and $\alpha>0$ depend only on the composition of the basis set, but are independent of the system size. In this definition, we have used the abbreviation $R_{\mu\nu}^2 = | \mathbf{R}_\mu - \mathbf{R}_\nu|^2$. This leads to the upper bound:
\begin{align}
    e_{\mathrm{max}} \leq 
    \max_\mu\left( \sum_{\nu} Ce^{-\alpha R_{\mu\nu}^2} \right).
\end{align}
Let us define subsets of $\nu$ denoted $\mathcal{S}_n = \{ \nu: n \leq \left| \mathbf{R}_\mu - \mathbf{R}_\nu \right| < n + 1 \}$ for $n \in \mathbb{N}$. Then we can rewrite the sum in the following way:
\begin{align}
    \sum_{\nu} e^{-\alpha R_{\mu\nu}^2} = \sum_{n=0}^\infty \sum_{\nu\in \mathcal{S}_n} e^{-\alpha R_{\mu\nu}^2}.
\end{align}
Within each subset $\mathcal{S}_n$ we have $\left| \mathbf{R}_\mu - \mathbf{R}_\nu \right| \geq n $ and hence $e^{-\alpha R_{\mu\nu}^2} \leq e^{-\alpha n^2}$. This leads to:
\begin{align}
    \sum_{\nu} e^{-\alpha R_{\mu\nu}^2} \leq 
    \sum_{n=0}^\infty \sum_{\nu\in \mathcal{S}_n} e^{-\alpha n^2} =
    \sum_{n=0}^\infty e^{-\alpha n^2} \sum_{\nu\in \mathcal{S}_n} =
    \sum_{n=0}^\infty e^{-\alpha n^2} |\mathcal{S}_n|,
\end{align}
where $|\mathcal{S}_n|$ denotes the cardinality (number of elements) of $\mathcal{S}_n$. To find an upper bound for $|\mathcal{S}_n|$ we first recall from Sec.~\ref{sec:system} that the system is enclosed in a sphere with volume $V$ and the linear density of atomic orbitals $\rho$ tends to a constant as the system size increases, $\rho=\mathcal{O}(1)$. Moreover, the quantity $\rho$ is necessarily finite for any system size. This means that if we choose any subsystem enclosed in an arbitrary volume $V'$, there exists a constant $d$ independent of the system size, such that the number of atomic orbitals within $V'$ is smaller than $dV'$. To find the volume $V_n'$ corresponding to each $\mathcal{S}_n$ we note that this set is enclosed by two spheres of radii $n$ and $n+1$. Therefore, $V_n' = \frac{4\pi}{3}(n+1)^3 - \frac{4\pi}{3}n^3 = \frac{4\pi}{3}(3n^2+3n+1)$ and $|\mathcal{S}_n| \leq d V_n' = \frac{4d\pi}{3}(3n^2+3n+1)$. Inserting this into the row sum, we obtain:
\begin{align}
    \sum_{\nu} e^{-\alpha R_{\mu\nu}^2} \leq 
    \frac{4d\pi}{3}\sum_{n=0}^\infty e^{-\alpha n^2}\,(3n^2+3n+1).
\end{align}
The last expression no longer depends on $\mu$ which enables us to write:
\begin{align}
    e_{\mathrm{max}} \leq 
    \frac{4Cd\pi}{3}\sum_{n=0}^\infty e^{-\alpha n^2}\,(3n^2+3n+1).
\end{align}
The remaining infinite sum is convergent, and the whole expression includes only three constants: $d$, $C$, and $\alpha$, which do not change with the system size. This shows that $e_{\mathrm{max}}< K$ and $K=\mathcal{O}(1)$.

With this knowledge, we can shift our focus to proving Theorem~\ref{th:overlap}. First, note that the Frobenius norm of the overlap matrix is equal to the square root of the sum of squares of all its eigenvalues, $e_r$. Without loss of generality, we assume that the eigenvalues are given in an non-increasing order, i.e. $e_1\geq e_2 \geq \ldots \geq e_N$. From Eckart-Young-Mirsky theorem, the best approximation to the overlap matrix of a given rank is obtained by truncating the eigendecomposition by dropping a given number of the lowest eigenvalues. Therefore, the condition that $\mbox{rank}_\epsilon(\mathbf{S})=M$ is equivalent to the relationship $\sum_{r=M+1}^N e_r^2 \leq \epsilon^2$. Let us rewrite the square norm of the overlap matrix exposing both parts:
\begin{align}
    ||S||^2 = \sum_{r=1}^M e_r^2 + \sum_{r=M+1}^N e_r^2.
\end{align}
The second part is bounded from above by $\epsilon^2$. Establishing a useful upper bound for the first sum is equally straightforward by recalling the lemma that $e_{\mathrm{max}}<K$. This allows us to write:
\begin{align}
\label{eq:sum2}
    \sum_{r=1}^M e_r^2 \leq \sum_{r=1}^M e_{\mathrm{max}}^2 = e_{\mathrm{max}}^2\,M < K^2M,
\end{align}
and for the complete square norm we thus have:
\begin{align}
\label{eq:bound1}
    ||S||^2 < K^2\,M + \epsilon^2.
\end{align}
On the other hand, one can also derive a lower bound for $||S||^2$ starting just from the definition and using the fact that AOs are normalized to unity:
\begin{align}
\label{eq:bound2}
    ||S||^2 = \sum_{\mu\nu} S_{\mu\nu}^2 \geq \sum_{\mu} S_{\mu\mu}^2 = 
    \sum_\mu 1 = N.
\end{align}
By combining the bounds from Eqs.~(\ref{eq:bound1})~and~(\ref{eq:bound2}) we obtain the inequality:
\begin{align}
    N < K^2\,M + \epsilon^2,
\end{align}
which can be solved for $M$ giving:
\begin{align}
   M > \frac{N - \epsilon^2}{K^2}.
\end{align}
To finalize the proof, we rewrite this bound in a slightly different form, namely:
\begin{align}
   M > \frac{N - \epsilon^2}{K^2} = 
   \frac{N}{K^2} \left( 1-\frac{\epsilon^2}{N} \right) \geq
   \frac{N}{K^2} \left( 1-\epsilon^2 \right) = c_\epsilon N,
\end{align}
which is exactly the statement from Theorem~\ref{th:overlap}. Note that $c_\epsilon = \frac{1-\epsilon^2}{K^2}$ and therefore depends only on the composition of the basis (through $K$) and the threshold $\epsilon$, but is independent of $N$. Moreover, $c_\epsilon>0$ because $0<\epsilon<1$.

\nocite{*}
\bibliography{cpd}

\end{document}